\documentclass[letterpaper,11pt,colorinlistoftodos]{article}
\textwidth=16.00cm 
\textheight=23.00cm 
\topmargin=0.00cm
\oddsidemargin=.00cm 
\evensidemargin=0.00cm 
\headheight=0cm 
\headsep=0.5cm
 
\usepackage{latexsym,array,delarray,amsthm,amssymb,amsmath,epsfig}
\usepackage[hidelinks]{hyperref}
\usepackage[usenames]{color}
\usepackage{caption}
\usepackage{subcaption}
\usepackage{verbatim}
\usepackage{todonotes}


\theoremstyle{plain}
\newtheorem{thm}{Theorem}[section]
\newtheorem{lemma}[thm]{Lemma}
\newtheorem{prop}[thm]{Proposition}
\newtheorem{cor}[thm]{Corollary}
\newtheorem{conj}[thm]{Conjecture}
\newtheorem*{thm*}{Theorem}
\newtheorem*{lemma*}{Lemma}
\newtheorem*{prop*}{Proposition}
\newtheorem*{cor*}{Corollary}
\newtheorem*{conj*}{Conjecture}

\theoremstyle{definition}
\newtheorem{defn}[thm]{Definition}
\newtheorem{ex}[thm]{Example}

\theoremstyle{remark}
\newtheorem*{rmk}{Remark}


\newcommand{\PP}{\mathbb{P}}

\newcommand{\RR}{\mathbb{R}}



\newcommand{\co}{\mathcal{O}}

\newcommand{\Cof}{\operatorname{Cof}}
\newcommand{\desc}{\operatorname{desc}}
\newcommand{\MRCA}{\operatorname{MRCA}}

\newcommand{\UE}{UE}

\newcommand{\EE}{EE}

\newcommand{\CK}{CK}
\newcommand{\Flat}{\operatorname{Flat}}

\newcommand{\tc}{\text{:}} 

\usepackage[normalem]{ulem}

\newcommand\blfootnote[1]{%
  \begingroup
  \renewcommand\thefootnote{}\footnote{#1}%
  \addtocounter{footnote}{-1}%
  \endgroup
}

\begin{document}

\title{Phylogenomic Models from Tree Symmetries}

\author{Elizabeth S.~Allman, Colby Long, and John A.~Rhodes}
\maketitle

\abstract{A model of  genomic sequence evolution on a species tree should include
not only a sequence substitution process, but also a coalescent process,
since different sites may evolve on different gene trees due to
incomplete lineage sorting. Chifman and Kubatko initiated the study of
such models, leading to the development of the SVDquartets methods of
species tree inference. A key observation was that symmetries in an
ultrametric species tree led to symmetries in the joint distribution of
bases at the taxa. In this work, we explore the implications of such
symmetry more fully, defining new models incorporating only the
symmetries of this distribution, regardless of the mechanism that might
have produced them. The models are thus supermodels of many standard
ones with mechanistic parameterizations. We study phylogenetic
invariants for the models, and establish identifiability of species tree
topologies using them. 
\blfootnote{\\
\textbf{Funding:} ESA and JAR were partially supported by NSF grant 2051760 and NIH grant P20GM103395. \\}
}

\section {Introduction}

The SVDquartets method of Chifman and Kubatko \cite{CK14,CK15} initiated a novel framework for species tree inference from genomic-scale data. 
Recognizing that individual sites may evolve along different ``gene trees" due to the population-genetic effect of incomplete lineage sorting, 
their method is designed to work with site pattern data generated by the multispecies coalescent model of this process combined with a standard model of site-substitution. 
However, rather than try to associate particular gene trees to sites,
they regard the observed site pattern distribution as a \emph{coalescent mixture}.
This effectively integrates the individual gene trees out of the analysis
and allows them to formulate statistical tests based on an algebraic understanding of the site pattern frequencies.
These tests detect the unrooted species tree topology in the case of four taxa. 
For a larger set of taxa, species trees can be found by inferring each quartet and then applying some 
method of quartet amalgamation. This leads to their SVDquartets method of species tree inference, which is implemented in PAUP* \cite{swofford2016} and which continues to be an important tool for practical phylogenetic inference (e.g., \cite{Jebb2020, Razifard2020, Crowl2022}).

The inference of unrooted 4-taxon species tree topologies in the SVDquartets approach is based on 
an algebraic insight that a certain flattening matrix built from  the site pattern distribution should have low rank on a 
distribution exactly arising from the model. The mathematical arguments for this in \cite{CK15} are based on the existence of a rooted cherry (i.e., a 2-clade) on an  
ultrametric species tree, leading to a symmetry in the site pattern distribution. Since any rooted 4-taxon tree with unrooted 
topology $ab|cd$ must display at least one of the clades $\{a,b\}$ or $\{c,d\}$, detecting that one or both of these clades
 is present is equivalent to determining the unrooted tree. 
 The SVDquartets method tests precisely this, without determining which of the clades is present.

In this work, we examine the algebraic framework underlying the work of Chifman and Kubatko and its subsequent extensions. 
We observe that the symmetry conditions implied by the  Chifman-Kubatko model are key to their inference approach. 
Based on this observation, we formulate several statistical models, 
encompassing those of \cite{CK15} as well as several more general mechanistic models, which capture the fundamental assumptions needed to justify SVDquartets.
In contrast with the sorts of models generally used in empirical phylogenetics, 
which have a mechanistic interpretation
(e.g., generation of gene trees by the coalescent process, generation of sequences by 
site-substitution models on the gene trees), the models here have only a descriptive interpretation, as they are defined algebraically by constraints on site pattern distributions. 

One consequence of defining our models in this way is that it becomes more clear that 
SVDquartets can give consistent species tree inference for mechanistic mixture models more general than that described in \cite{CK15} (as hinted by results in \cite{ LK18geneflow, LK18}). 
In fact, it is easy to formulate plausible mechanistic models with many parameters 
(e.g. mixtures with many different base substitution processes) for which many of the numerical parameters must be non-identifiable, but for which SVDquartets inference of the species tree topology is statistically 
consistent. 
Such generality can be viewed as a strength of SVDquartets, as model misspecification arising from assumption of a simple substitution process across the entire genome is avoided.

A second consequence is that our models highlight a symmetry in the site pattern distribution that reflects the \emph{rooted} 
species tree, a symmetry that is present even for 3-taxon trees.
Methods for inference of the species tree root in the same framework were proposed 
in \cite{GK16, TK17}, but both of these works considered four taxa at a time, which is the smallest \emph{unrooted} tree size in which topologies may differ.
Since rooted trees are determined by their rooted triples, focusing on the 3-taxon 
case offers clear advantages for developing new inference methods. Unfortunately, 
in doing so, we lose the ability to naturally base statistical inference on rank conditions on 
matrices of the sort that underlie SVDquartets. 
Indeed, the possible flattening matrices for DNA site pattern data from the 
Chifman-Kubatko model in the 3-taxon case are all $4\times 16$ with full rank, so rank alone cannot distinguish them. 
As a consequence, the matrix Singular Value Decomposition (SVD) of the flattening matrix, which is used to determine approximate rank in the 
SVDquartets method, has no obvious role. However, we present an alternative matrix that must satisfy certain rank conditions
in the 3-taxon case, which suggests it may be possible to develop a 3-taxon method analagous to SVDquartets.

Our work here is theoretical, dealing primarily with model definitions and algebraic consequences of those models. 
We suggest its implications for data analysis, but do not explore 
possible methods based on these results in depth.
We begin the next section with a 
review of the model of
\cite{CK15} and use it to motivate the introduction of our first model, the \emph{ultrametric exchangeable model}. 
We then discuss  a 
number of its submodels on ultrametric trees, and 
show in Section \ref{sec:UEid} that the species tree parameter of these models is
generically identifiable and species tree inference by SVDquartets is justified for all. 
In Section \ref{sec:UEdim}, we give a recursive formula for computing the dimension 
of the ultrametric exchangeable model, in terms 
of the dimensions of its subtree models joined at the root. 
This indicates that the dimension depends on the topology of the tree, 
which has implications for inference methods. 
In Section \ref{sec:genmodel}, we drop the assumption of an ultrametric species tree, 
reviewing the model of \cite{LK18} in this setting
and using it  to motivate our second model, the \emph{extended exchangeable model}. In Section \ref{sec:EE3taxa}, we explore the extended exchangeable model in
more depth by restricting to 3-taxon trees and determining several algebraic invariants
of this model. Finally, in Section \ref{sec:EEid}, we show that
the species tree parameter of the extended exchangeable model, as well as 
those of several mechanistic models that it contains, are generically identifiable.

\section{A genomic model of site patterns on ultrametric trees}\label{sec:UE}

We begin by reviewing the simplest mechanistic model of Chifman and Kubatko \cite{CK15}.   For emphasis, we call this model
(and others) \emph{mechanistic} since it incorporates models of both incomplete lineage sorting and of site substitution (e.g., GTR) in its formulation.
Many mechanistic models, 
including that of \cite{CK15}, will be included as submodels of the more general non-mechanistic models
we define below and for which the theory underlying SVDquartets applies more broadly.

Specifically, let $\sigma^+=(\psi^+,\lambda)$ be an ultrametric rooted species tree on a set of taxa $X$, with rooted leaf-labelled topology $\psi^+$ 
and edge lengths $\lambda$ in number of generations. Let $N$ be a single constant population size for all populations (i.e., edges) 
in the species tree, and $\mu$ a single scalar mutation rate for all populations. For a DNA substitution model fix some GTR rate 
matrix $Q$ with associated stable base distribution $\pi$. 

These parameters determine a DNA site pattern distribution as follows: a site is first assigned a leaf-labelled ultrametric gene tree $T$ sampled 
under the multispecies coalescent model on $\sigma^+$ with populations size $N$, with one gene lineage sampled per taxon. Then a site 
evolves on $T$ according to the base substitution model with root distribution $\pi$ and rate matrix $\mu Q$. Site patterns thus have a distribution 
which is a \emph{coalescent independent mixture} of site pattern distributions arising from the same GTR model on individual gene trees. We denote 
this model by $\text {CK}=\text {CK}(\sigma^+,N,\mu,Q,\pi)$. 
(While \CK\ has a mild non-identifiability issue in that $\lambda$, $N$,  and $\mu$ 
are not separately identifiable, this will not 
be of concern in this work since our focus is on inferring the topology $\psi^+$.)

\smallskip

A key feature of the \CK\ model is an exchangeability property that it inherits from the multispecies coalescent, due to the nature of the substitution model. 
Specifically, suppose $\{a,b\}\subseteq X$ is a  2-clade displayed on $\sigma^+$. Then for any metric gene tree $T$, let $T'$ be the gene tree obtained 
from $T$ by switching the labels $a$ and $b$. Then the ultrametricity of  $\sigma^+$ together with exchangeability of lineages under the coalescent 
model implies $T$ and $T'$ are equiprobable. Now consider any site pattern $z=(z_1,z_2,\dots, z_n)$ for $X$, where $z_i\in\{A,G,C,T\}$ is the base 
for taxon $x_i\in X$, and let $z'$ be the site pattern with the $a$ and $b$ entries interchanged. Then under the base substitution model the 
probability of $z$ on $T$ equals the probability of $z'$ on $T'$.
Thus, with $\mathcal T$ denoting the space of all metric gene trees $T$ on $X$,
\begin{align}
\PP(z \mid \sigma^+,N,\mu,Q,\pi)&=\int_{\mathcal T} \PP(z\mid T,\mu,Q,\pi) \PP(T\mid \sigma^+,N)\,dT\notag \\
&=\int_{\mathcal T} \PP(z'\mid T',\mu,Q,\pi) \PP(T'\mid \sigma^+,N)\,dT'\label{eq:exch}\\
&=\PP(z' \mid \sigma^+,N,\mu,Q,\pi).\notag\end{align}

Thus any 2-clade  on the species tree produces symmetry in the site pattern frequency distribution. 
Moreover, since both the multispecies coalescent model and the sequence substitution model are well behaved with respect to 
marginalizing over taxa, it immediately follows that 2-clades on the induced subtrees $\sigma^+|_Y$ on subsets $Y\subset X$ 
will produce symmetries in the marginalizations of the site pattern distribution to $Y$.

This motivates the following definition of an algebraic model of site pattern probabilities. 
In this definition and in what follows, it will be convenient to regard a site pattern probability distribution $P$
from a $\kappa$-state model on an $n$-leaf tree as an \emph{$n$-way site pattern probability tensor}.
That is, we regard $P=(p_{i_1 \dots i_n})$ as a $\kappa\times \cdots \times\kappa$ array 
with non-negative entries adding to 1, where $p_{i_1 \dots i_n}$ denotes the probability that the $n$ (ordered) taxa are
in state $(i_1, \dots, i_n)$.

\begin{defn}
\label{defn: uemodel}
Let $\psi^+$ be a rooted binary topological species tree on $X$, and $\kappa\ge2$. Then the $\kappa$-state \emph{ultrametric exchangeable model}, 
\UE$_\kappa(\psi^+$), is the set of all $|X|$-way site pattern probability tensors $P$, 
such that for every $Y\subseteq X$, and 
every 2-clade $\{a,b\}$ on $\psi^+|_Y$, the marginal distribution $P_Y$ of site 
patterns on $Y$ is invariant under exchanging the $a$ and $b$ indices.
The collection of all distributions as $\psi^+$ ranges over rooted 
binary topological trees on $X$ is the UE model (or the UE$_\kappa$ model to avoid ambiguity).
\end{defn}

Although this model has `ultrametric' in its name, note that the tree $\psi^+$ is a topological rooted tree, with no edge lengths. 
`Ultrametric' here refers to the motivation for the model, generalizing the \CK\ model on an ultrametric species tree discussed above. 
While one can contrive mechanistic models on non-ultrametric trees that lead to distributions in the \UE$_\kappa(\psi^+$) 
model, we do not find them very natural, and prefer to highlight the ultrametricity that a plausible mechanistic model is 
likely to require to lie within \UE$_\kappa(\psi^+$).

It is important to note that unlike most models in phylogenetics, including the \CK\ model  above, the \UE\ model is not defined through 
mechanistically-interpretable parameters. Rather it has a descriptive form relating entries of the model's joint distributions, 
chosen to reflect certain implicit features of the \CK\ model.  The \UE\ model then can be viewed as a relaxation, or supermodel, of 
that more restrictive model.    

\medskip

\begin{ex}
\label{ex:UEmodel}
Let $\psi^+$ be the rooted 3-taxon tree $(a,(b,c))$ and consider
a 2-state substitution model with states $\{0,1\}$. A probability
distribution for the \UE$\big(\, (a,(b,c))\,  \big)$ model is $P = (p_{ijk})$, a
$2 \times 2 \times 2$ array with entries the joint probabilities for assignments
of states to the taxa, 
$p_{ijk} = \PP (a = `i\text{'}, b =`j\text{'}, c=`k\text{'})$.

Since the constraints on the model arise only from subsets $Y\subseteq\{a,b,c\}$ 
that contain at least two taxa, there are four subsets of interest:
$$\{a,b,c\},\ \{b,c\}, \ \{a,b\},\ \{a,c\}.$$
Then \UE$_2(\psi^+)$ is a subset of the probability simplex $\Delta^7\subset \mathbb{R}^8$
defined by the following linear equations.
\begin{align*}
	\{a,b,c\}&:\  \begin{cases} p_{010} =   p_{001} \\
	 p_{101} =   p_{110} \end{cases}\\
	 \{b,c\}&:\  p_{001} + p_{101} =   p_{010} + p_{110} \\
	  \{a,b\}&:\  p_{010} + p_{011} =   p_{100} + p_{101} \\
	  \{a,c\}&:\  p_{001} + p_{011} =   p_{100} + p_{110} \\
\end{align*}
The first two constraints, for  $\{a,b,c\}$, express that slices
on the first index of probability tensors in \UE$_2(\psi^+)$ are symmetric.  Specifically, 
if $P_{z \cdot \cdot }$ denotes the conditional distribution of $b,c$
when $a$ is in state $z$, then the $2\times 2$ matrix $P_{z \cdot \cdot }$ is symmetric for each $z \in \{0,1\}$.
These imply the third equation, for $\{b,c\}$, expressing that marginalizing over the first index gives  a symmetric matrix.
The fourth equation, for $\{a,b\}$, is independent of the first three, but with them implies the fifth one, for $\{a,c\}$.

Taking into account the probabilistic requirement that $\sum_{i,j,k\in\{0,1\}}p_{ijk}=1,$ we see
the model is a restriction of a 4-dimensional affine space to the simplex $\Delta^7$ 
with $0\le p_{ijk}\le 1$.
\end{ex}

It is clear that far more complicated models of site pattern evolution
on a species tree than the \CK\ model give rise to distributions which also lie within the \UE\ model, since
the only requirement is that the resulting site pattern distributions reflect the symmetries of the species tree. 
For instance, in \cite{CK15}, an extension is given to allow for $\Gamma$-distributed rate variation across sites.
A further generalization, allowing for edge-dependent
variation of the population size $N = N_e$, as well as time-dependent variation in the mutation rate $\mu$ 
across the species tree, can also easily be seen to produce distributions lying within \UE.
Since the symmetry conditions arising from the species tree are  linear constraints on the site pattern probability distributions, 
arbitrary mixtures of models exhibiting the same symmetries will again exhibit these symmetries.
Thus, the mechanistic models in \cite{ALR18} on ultrametric trees that allow for 
variation in the substitution rate matrix across sites also are submodels of \UE.
Similarly, it has been shown that a model of gene flow on a 3-taxon ultrametric species tree 
will produce site pattern probability distributions that reflect the symmetry in the 2-clade of the species
tree  \cite[Proposition 0.8]{LK18geneflow}. In focusing on the \UE\ model we obtain results that apply to 
all these models, and possibly more to be formulated in the future.

\section{Generic identifiability of trees under the \UE\ model}\label{sec:UEid}

To use a statistical model for valid inference, it is necessary that any 
parameter one wishes to infer be \emph{identifiable}; that is,
a probability distribution from the model must uniquely determine the parameter. 
For phylogenetic models, this strict notion is generally too strong to hold, but one can often 
establish a similar generic result, that the set of distributions on which identifiability fails is of 
negligible size (measure zero) within the model. The following theorem is in this vein.

\begin{thm} The rooted binary topological tree $\psi^+$ is identifiable 
from a generic probability distribution in the \UE\ model.
\end{thm}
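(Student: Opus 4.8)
The plan is to reduce to the $3$-taxon case and then exploit the linearity of the model. Since a rooted binary tree on $X$ is uniquely determined by the set of rooted triples it displays, two distinct trees $\psi_1^+\neq\psi_2^+$ must induce different rooted triples on some three-element subset $Y\subseteq X$. Because marginalization carries \UE$(\psi_1^+)$ into \UE$(\psi_1^+|_Y)$ (the defining symmetries of the latter, coming from $2$-clades on $\psi_1^+|_{Y'}$ with $Y'\subseteq Y$, are among the defining symmetries of the former), it suffices to show that the three rooted-triple models on $Y$ are generically distinguishable: a generic distribution satisfying the cherry symmetry of one triple fails the cherry symmetries of the other two. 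Granting this, the recovery procedure is to read off, for each triple $Y$, the unique cherry transposition under which the marginal $P_Y$ is invariant, and then assemble these triples into $\psi^+$.

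First I would record the structure of the models. By its definition each \UE$(\psi^+)$ is the intersection of the probability simplex $\Delta$ with an affine subspace $L(\psi^+)$ cut out by the finitely many symmetry equations, as the $3$-taxon Example above illustrates. For such linearly-defined models, identifiability of $\psi^+$ from a point of \UE$(\psi_1^+)$ fails exactly when that point also lies in \UE$(\psi_2^+)$ for some $\psi_2^+\neq\psi_1^+$, i.e.\ in $\big(L(\psi_1^+)\cap L(\psi_2^+)\big)\cap\Delta$. This is a measure-zero subset of \UE$(\psi_1^+)$ as soon as $L(\psi_1^+)\not\subseteq L(\psi_2^+)$, and for that it is enough to produce a single distribution $P\in$\,\UE$(\psi_1^+)$ violating one defining equation of $L(\psi_2^+)$: a hyperplane not containing the convex set \UE$(\psi_1^+)$ meets it in a lower-dimensional slice. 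Since only finitely many trees and triples occur, the full failure locus is a finite union of such slices and hence still measure zero.

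To produce the violating distribution I would invoke the containment $\CK\subseteq$\,\UE\ established in \eqref{eq:exch}. Take $Y=\{a,b,c\}$ with $\psi_2^+|_Y=(a,(b,c))$, so that one defining equation of $L(\psi_2^+)$ is the invariance of $P_Y$ under the $b\leftrightarrow c$ transposition; a short computation shows this invariance forces the equality of pairwise marginals $P_{\{a,b\}}=P_{\{a,c\}}$. Choosing generic numerical parameters for a \CK\ distribution on $\psi_1^+$, whose induced triple $\psi_1^+|_Y$ has a \emph{different} cherry, say $\{a,b\}$, the marginal on $Y$ is a \CK\ distribution on $(c,(a,b))$. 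For this triple the closely related pair $\{a,b\}$ and the distant pair $\{a,c\}$ have unequal joint distributions because of their different coalescent histories, so $P_{\{a,b\}}\neq P_{\{a,c\}}$ and the $b\leftrightarrow c$ symmetry is violated, yielding the required point of \UE$(\psi_1^+)$.

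The main obstacle is this last step: confirming that the foreign cherry symmetry genuinely fails rather than being an unforeseen consequence of the true tree's constraints. This is exactly the assertion that the three rooted-triple models are pairwise distinct, and I expect to settle it by the explicit pairwise-marginal criterion above, either by the coalescent argument just sketched or, if a self-contained statement is preferred, by exhibiting one explicit symmetric tensor in \UE$((c,(a,b)))$ with $P_{\{a,b\}}\neq P_{\{a,c\}}$. Once this $3$-taxon distinctness is in hand, the reduction to triples and the linear-genericity argument assemble directly into the theorem.
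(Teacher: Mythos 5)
Your proposal follows the same skeleton as the paper's proof: both exploit that each \UE$(\psi^+)$ is an affine-linear space meeting the probability simplex, reduce the comparison of two trees to a 3-taxon marginalization where their rooted triples differ, note that the foreign cherry symmetry forces equality of two pairwise marginals, and conclude genericity because non-containment of the linear spaces makes the failure locus a finite union of proper (hence measure-zero) slices. All of that is set up correctly, and your derivation that $b\leftrightarrow c$ invariance forces $P_{\{a,b\}}=P_{\{a,c\}}$ is exactly the right criterion.

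The one genuine gap is the step you yourself flag as the main obstacle: you assert that for a \CK\ distribution on $(c,(a,b))$ the cherry pair and the non-cherry pair have unequal pairwise marginals ``because of their different coalescent histories,'' but you never prove it. The claim is true and completable (the cherry pair's coalescence time is stochastically strictly smaller, and a quantity such as the same-state probability $\operatorname{tr}\bigl(\operatorname{diag}(\pi)\,e^{2tQ}\bigr)$ is strictly decreasing in $t$, so the two integrated marginals cannot coincide), but that requires an argument about coalescent time distributions that is absent. The paper avoids this entirely by choosing a witness with \emph{no} coalescent process: the plain Jukes-Cantor model on an ultrametric metric version of $\psi_1^+$, which lies in \UE$(\psi_1^+)$ and marginalizes to JC on each induced triple. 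There, equal pairwise marginals would force equal JC distances, contradicting that a binary ultrametric tree with positive edge lengths has its cherry distance strictly shorter---a one-line, fully self-contained finish, and the coalescent-free witness is also what makes the paper's follow-up result for analytic submodels (with JC as a limit) go through cleanly. Finally, your fallback of ``exhibiting one explicit symmetric tensor'' on three taxa is not sufficient as stated: the witness must be a distribution in \UE$(\psi_1^+)$ on the \emph{full} taxon set that marginalizes to the exhibited tensor, and such lifts are not automatic (e.g., tensoring with uniform distributions on the remaining taxa generally violates the symmetries involving those taxa); the parameterized \CK\ or JC witnesses avoid this issue precisely because those models marginalize consistently to induced subtrees.
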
 
\begin{proof} Fix $\kappa$ and a taxon set $X$. Since for each binary species tree topology
$\psi^+$ the symmetry conditions are expressible by linear equations, the  \UE\ model 
for $\psi^+$ is the intersection of a linear space with the probability simplex. 
We establish the result by showing that the linear model spaces for different $\psi^+$ are not 
contained in one another, since then their intersection is of lower dimension and hence of 
measure zero within them.

That the linear spaces are not contained in one another will follow by establishing that for each 
$\psi^+$ there is at least one distribution in
 \UE$_\kappa(\psi^+)$ that fails to have any `extra symmetry' required for it to be in the model for a 
 different tree.   To construct such a distribution, assign positive edge lengths to $\psi^+$ so that the tree is ultrametric, and 
 consider on it the $\kappa$-state analog of the (non-coalescent) Jukes-Cantor (henceforth denoted JC) model. The resulting site pattern distribution $P$ is easily seen to have the necessary symmetries to lie in the UE model.

To show $P$ has no extra symmetries, suppose to the contrary that there is a $Y\subset X$ containing two taxa $a,c$ where 
$P|_Y$ is invariant under exchanging the $a$ and $c$ indices, yet $a,c$ do not form a cherry on $\psi^+|_Y$. Then, after 
possibly interchanging the names of $a,c$, there is a third taxon $b$ such that the rooted triple $((a,b),c)$ is displayed 
on $\psi^+|_Y$. Moreover, by further marginalizing to $Y'=\{a,b,c\}$, we have that $P|_{Y'}$ arises from a 
Jukes-Cantor model on a 3-taxon ultrametric tree with positive edge lengths and rooted topology $((a,b),c)$, and exhibits $a,c$ symmetry.

To see that this is impossible, note that if $P|_{Y'}$ has both $a,b$ and $a,c$ symmetry, then it also exhibits 
$b,c$ symmetry. Thus, all marginalizations of $P|_{Y'}$ to two taxa are equal. 
This implies all JC distances between taxa, which can be computed from these marginalizations, are equal. 
This contradicts that the tree was binary.
\end{proof}

Note that the proof above did not consider a coalescent process in any way in order to show that
extra symmetries do not generically hold in \UE $(\psi^+)$.  However, since applications may consider submodels of the UE model, such as the CK model, it is necessary to ensure they do not lie within the exceptional set of non-generic points in the UE model where tree identifiability may fail. To address this issue, we seek an 
identifiability result for more 
general mechanistic models that have an \emph{analytic parameterization}, by which we mean that for each topology $\psi^+$ there is an analytic map from a full-dimensional connected subset of $\mathbb R^k$, for some $k$, to the set of 
probability distributions comprising the model.  
For example, if $\sigma^+$ is a rooted metric species tree with shape
$\psi^+$,   and site pattern frequency distributions are generated on gene trees arising under the coalescent using the
GTR+I+$\Gamma$ model, then the collection of such distributions is given by an analytic parameterization, 
and as such is a submodel of \UE$(\psi^+)$.

\begin{thm} \label{thm:UEidanalytic} Consider any submodel of the \UE\ model 
with an analytic parameterization general enough to 
have the JC model as a limit.
Then for generic parameters the rooted topological tree $\psi^+$ is identifiable.
\end{thm}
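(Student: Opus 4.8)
The plan is to bootstrap from the linear-algebra picture of the previous theorem using the rigidity of real-analytic functions. Fix the taxon set $X$ and the number of states $\kappa$, and for the target topology $\psi^+$ let $\phi\tc\Theta\to\Delta$ be the analytic parameterization of the submodel, where $\Theta\subseteq\RR^k$ is the given full-dimensional connected parameter domain and $\Delta$ is the probability simplex inside the space of site-pattern tensors. For each competing rooted binary topology $\tilde\psi^+\neq\psi^+$ on $X$, the symmetry equations defining the \UE\ model of $\tilde\psi^+$ are homogeneous and linear, so they cut out a linear subspace $L(\tilde\psi^+)$ of the tensor space that contains the entire $\tilde\psi^+$ submodel. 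Since $\phi(\Theta)$ already lies in $L(\psi^+)$, identifiability of $\psi^+$ at a parameter $\theta$ can fail only if $\phi(\theta)\in L(\tilde\psi^+)$ for some $\tilde\psi^+\neq\psi^+$. As there are only finitely many rooted binary topologies on $X$, it is enough to show that for each fixed $\tilde\psi^+$ the set $\{\theta\in\Theta\tc\phi(\theta)\in L(\tilde\psi^+)\}$ has Lebesgue measure zero.

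To produce a single non-vanishing witness I would invoke the JC limit. By hypothesis there is a sequence $\theta_n\in\Theta$ (for instance, degenerating the coalescent to a single gene tree and sending the rate matrix to the JC form) with $\phi(\theta_n)\to P$, where $P$ is the JC site-pattern distribution on an ultrametric binary tree of shape $\psi^+$. The construction in the proof of the previous theorem shows exactly that such a $P$ carries no extra symmetry, i.e.\ $P\notin L(\tilde\psi^+)$ for every $\tilde\psi^+\neq\psi^+$. Hence for each such $\tilde\psi^+$ there is a linear form $g$ vanishing on $L(\tilde\psi^+)$ with $g(P)\neq0$. Since $g$ is linear (hence continuous) and $\phi$ is analytic, the composite $h=g\circ\phi$ is real-analytic on $\Theta$, and $h(\theta_n)=g(\phi(\theta_n))\to g(P)\neq0$ forces $h\not\equiv0$.

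The final step is the measure-zero conclusion. A real-analytic function on a connected open subset of $\RR^k$ that is not identically zero has a zero locus of Lebesgue measure zero; this is the point at which connectedness of $\Theta$, built into the definition of an analytic parameterization, is used. Applying this to $h$ shows $\{h=0\}$ is null, and therefore so is the smaller set $\{\theta\tc\phi(\theta)\in L(\tilde\psi^+)\}\subseteq\{h=0\}$. Taking the union over the finitely many competing topologies leaves the exceptional set null, so for generic $\theta$ the distribution $\phi(\theta)$ lies in no $L(\tilde\psi^+)$ with $\tilde\psi^+\neq\psi^+$; consequently $\psi^+$ is recovered as the unique rooted binary topology whose \UE\ model contains $\phi(\theta)$.

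I expect the main obstacle to be bookkeeping at the limit rather than the analytic estimate. One must verify that the phrase ``general enough to have the JC model as a limit'' truly supplies a JC distribution on a \emph{binary} tree of the correct shape $\psi^+$, with generic edge lengths, since the no-extra-symmetry argument of the previous theorem relied on the tree being binary (equivalently, on the pairwise JC distances not all coinciding). It is worth emphasizing that analyticity of $\phi$ at the limiting (possibly boundary) value is never needed: all that is required is one sequence inside $\Theta$ along which $h$ does not tend to zero, and this already certifies $h\not\equiv0$ on the interior.
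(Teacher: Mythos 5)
Your proposal is correct and follows essentially the same route as the paper's proof: both use the JC limit point (guaranteed by the previous theorem to lack extra symmetries) to show the parameterized image cannot lie inside the closed linear space of any competing topology, then conclude that the bad parameter set is the zero locus of a not-identically-zero real-analytic function, hence of measure zero, and finish by taking the union over the finitely many competing topologies. The only difference is expository: you make explicit the choice of a separating linear form $g$ and the composite $h=g\circ\phi$, where the paper compresses this into the statement that the preimage is a proper analytic subvariety of $\Theta$.
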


\begin{proof} 
Let $$f:\Theta \to  UE(\psi^+)$$ denote the parameterization map for the submodel on tree $\psi^+$. Then $f(\Theta)$ cannot lie entirely in $UE(\phi^+)$ for any $\phi^+\ne \psi^+$, since, as shown in the previous proof, there are points from the JC model in the closure of  $f(\Theta)$ which are not in the closed set $UE(\phi^+)$. Thus the set $f^{-1}(UE(\psi^+) \cap UE(\phi^+))$ is a proper analytic subvariety of $\Theta$, and hence of measure zero in it. Since there are only finitely many $\phi^+$, for generic points in $\Theta$ the resulting distribution lies in the UE model for $\psi^+$ only.
\end{proof}

Note that the CK model, which is analytically parameterized, has the JC model as a limit, since after choosing a JC substitution process one can let the population size $N\to0^+$. This effectively ``turns off" the coalescent process, as small population sizes result in rapid coalescence.

Geometrically, the \UE\ model on a particular tree is a convex set, since it can be expressed as the solution set for a system of linear equations and inequalities. It immediately follows that mixtures of instances of the UE model on the same tree, whether 
defined by integrals such as typical rates-across-sites models (e.g., the ultrametric GTR+$\Gamma$ coalescent mixture 
of \cite{CK15}) or as sums (e.g., an ultrametric mixture of coalescent mixtures, as in \cite{ALR18}), or both, are 
also submodels of \UE\ on that tree. Provided the model has an analytic parameterization, as all these examples do, 
Theorem \ref{thm:UEidanalytic} then says that the tree topology is generically identifiable. Even in cases of mixtures 
which have so many numerical parameters that dimension arguments show they cannot all be individually identifiable, 
the species tree topology remains so. This is a potentially valuable observation, as a scheme designed for inference 
of a tree under the \UE\ model may avoid some issues of model misspecification that might arise with a  more 
standard approach of restricting to very simple models 
(e.g. constant population size) so that all numerical parameters are identifiable as well.

\medskip

The above theorems of course imply the weaker statement that for 
the \UE\ model (and many analytic submodels of the \UE\ model) on four or more taxa, the unrooted species tree topology is identifiable.
As SVDquartets is designed to infer unrooted 4-taxon trees, 
this gives hope that it might also be able to infer the unrooted tree 
topology for distributions from the more general \UE\ model.
For this to be possible, it is necessary to prove that the specific flattening matrices
considered in the SVDquartets method satisfy certain rank conditions,
the content of the next theorem.

Recall that if a $\kappa\times\kappa\times\kappa\times\kappa$ array $P$ has indices 
corresponding to taxa $a,b,c,d$, then the flattening  $\Flat_{ab|cd}(P)$ is a $\kappa^2\times \kappa^2$ 
matrix with row and column indices in $\kappa\times\kappa$ and  $((i,j),(k,l))$-entry $P(i,j,k,l)$. 

\begin{thm} \label{thm:rankUEflattenings}
For $P\in\text{\UE}_\kappa(\psi^+)$, and $ab|cd$ any unrooted quartet induced from 
the tree $\psi^+$, let $\tilde P=P|_{\{a,b,c,d\}}$ denote the marginalization to the taxa $a,b,c,d$.
Then for all such $P$, $\Flat_{ab|cd}(\tilde P)$ has rank  at most $\binom{\kappa+1}2$, while for 
generic $P$, $\Flat_{ac|bd}(\tilde P)$ and $\Flat_{ad|bc}(\tilde P)$ have rank $\kappa^2$.
\end{thm}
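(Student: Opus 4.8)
I will prove the two assertions separately, the first being a direct consequence of the defining symmetries and the second an application of the standard principle that a rank value achieved at a single point of a linear family holds generically throughout it.

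For the upper bound, observe that since $ab|cd$ is induced by $\psi^+$, the rooted quartet $\psi^+|_{\{a,b,c,d\}}$ displays at least one of the $2$-clades $\{a,b\}$ or $\{c,d\}$. Suppose it is $\{a,b\}$ (otherwise exchange the roles of rows and columns below). By Definition~\ref{defn: uemodel}, the marginal satisfies $\tilde P(i,j,k,l)=\tilde P(j,i,k,l)$, so the rows of $\Flat_{ab|cd}(\tilde P)$ indexed by $(i,j)$ and $(j,i)$ are equal. As there are only $\binom{\kappa+1}{2}$ distinct unordered index pairs, the matrix has at most $\binom{\kappa+1}{2}$ distinct rows, and hence rank at most $\binom{\kappa+1}{2}$.

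For the generic statements, note that the condition ``$\Flat_{ac|bd}(\tilde P)$ has rank less than $\kappa^2$'' is the vanishing of its determinant, a polynomial in the entries of $P$, and hence a Zariski-closed condition on the linear space $\cl$ with $\UE_\kappa(\psi^+)=\cl\cap\Delta$; likewise for $\Flat_{ad|bc}$. Since marginalization $P\mapsto\tilde P$ is linear, it suffices to exhibit a single $P_0\in\cl$ for which both wrong flattenings of $\tilde P_0$ attain rank $\kappa^2$: the two determinants are then not identically zero on $\cl$, so the locus where either drops rank is a proper subvariety, of measure zero, which generic $P$ avoids. For $P_0$ I would take the $\kappa$-state JC distribution on an ultrametric tree of shape $\psi^+$, which lies in $\UE_\kappa(\psi^+)$ exactly as in the proof that JC site-pattern distributions lie in the UE model; its marginal $\tilde P_0$ is the JC distribution on the induced ultrametric quartet, with some internal-edge parameter $\alpha_e\in(0,1)$ and nonzero pendant parameters.

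The crux is then to verify that this $\tilde P_0$ has full-rank wrong flattenings, and here I would diagonalize. The symmetric Neyman substitution matrices are simultaneously diagonalized by a real orthogonal $U$ whose first eigenvector is uniform, and applying $U$ to each tensor factor is a rank-preserving change of basis on each flattening. Writing out the transformed tensor, one finds that $\Flat_{ac|bd}$ becomes, after stripping off nonzero diagonal row- and column-scalings coming from the pendant edges, the matrix $N=\tfrac{1-\alpha_e}{\kappa}\,I+\alpha_e\,VV^{\top}$, where the columns of $V$ are the vectors $u\otimes u$ formed from the rows $u$ of $U$. Since $VV^{\top}$ is positive semidefinite and $\tfrac{1-\alpha_e}{\kappa}>0$ for a proper internal edge, $N$ is positive definite and so has full rank $\kappa^2$; the identical computation applies to $\Flat_{ad|bc}$. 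This reduction to a manifestly positive-definite matrix is the one nonroutine step, and it is what overcomes the genuine difficulty here: the $\kappa$-state Neyman model is in general not group-based, so the Fourier diagonalization that trivializes the $\kappa=2$ case is unavailable, while a naive rank-$\kappa^2$ product tensor cannot be substituted because it violates the model's symmetry constraints (for instance the required symmetry of every pairwise marginal, and the mixed symmetries forced when the induced quartet is a caterpillar). Routing through the JC model on an ultrametric tree secures membership in $\UE_\kappa(\psi^+)$ for free, and the eigenbasis reduction then delivers full rank uniformly in $\kappa$.
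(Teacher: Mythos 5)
Your proposal is correct, and its architecture matches the paper's proof almost step for step: the rank bound via rows repeated under the cherry symmetry, the reduction of the generic claim to exhibiting a single witness (since the rank-deficiency locus is cut out by a determinant, hence a proper subvariety of the affine hull once one point avoids it), and even the witness itself --- the $\kappa$-state JC model on an ultrametric realization of $\psi^+$ with no coalescent --- are all the same. Where you genuinely diverge is in verifying that this witness has full-rank wrong flattenings. The paper, following \cite[Section 4]{AR06}, writes $\Flat_{ac|bd}(\tilde P)=N_{ac}^T D N_{bd}$ with $N_{ac}=M_a\otimes M_c$, $N_{bd}=M_b\otimes M_d$, and $D$ diagonal with entries of the internal-edge joint distribution $F=(1/\kappa)M_{int}$; full rank is then immediate because $F$ has no zero entries and Kronecker products of invertible matrices are invertible. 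You instead pass to a common real orthogonal eigenbasis of the JC matrices and exhibit the core of the transformed flattening as a positive multiple of the identity plus a positive semidefinite term, hence positive definite (with the probability normalization $F=(1/\kappa)M_{int}$ the core is $\frac{1-\alpha_e}{\kappa^2}I+\frac{\alpha_e}{\kappa}VV^{\top}$ rather than your $\frac{1-\alpha_e}{\kappa}I+\alpha_e VV^{\top}$, but this overall scaling is immaterial). Both verifications are valid. The paper's factorization buys generality and brevity: it needs only that the internal-edge joint distribution have nonzero entries and the pendant Markov matrices be invertible, so it transfers verbatim to non-JC witnesses and avoids any spectral computation; yours buys self-containedness and the slightly stronger conclusion of positive definiteness after an orthogonal change of basis. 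One side remark of yours is incorrect, though harmlessly so: the $\kappa$-state Neyman/JC model \emph{is} group-based (e.g.\ over $\ZZ_\kappa$, with all non-identity group elements sharing a single parameter), so simultaneous Fourier-type diagonalization is available for every $\kappa$; indeed your orthogonal $U$ is exactly such a simultaneous diagonalization, so the step you flag as an obstruction is precisely what your own argument carries out.
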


\begin{proof} Since $\psi^+|_{\{a,b,c,d\}}$ has at least one cherry,  assume one is formed by $a,b$. Then symmetry under 
exchanging the $a,b$ indices of $\tilde P$ shows that for each $1\le i<j\le \kappa$, the $(i,j)$ and $(j,i)$ rows of
$\Flat_{ab|cd}(\tilde P)$ are identical. Thus that flattening has at most $\kappa^2-\binom{\kappa}{2}=\binom{\kappa+1}2$ distinct rows, and 
its rank is at most $\binom{\kappa+1}2$.

We prove the second statement for $\Flat_{ac|bd}(\tilde P)$, noting that the argument for $\Flat_{ad|bc}(\tilde P)$ is similar. To show that for generic $P\in\text{\UE}_\kappa(\psi^+)$, 
$\Flat_{ac|bd}(\tilde P)$ has full rank, 
it suffices to construct a single $P$ for which this flattening matrix is full rank. 
To see that this is the case, consider the algebraic variety
$$V_{ac|bd} = \{ P \in \mathbb{R}^{\kappa^{|X|}}| \det(\Flat_{ac|bd}(\tilde P)) = 0\}.$$
This variety is defined by a single degree $\kappa^{2}$ polynomial and contains
all of the points $P$ for which $\Flat_{ac|bd}(\tilde P)$ is singular.
If there is a single point $P \in \UE_\kappa (\psi^+)$ for which $\Flat_{ac|bd}(\tilde P) \not = 0$, then the affine space $\UE_\kappa (\psi^+)$ is not contained in $V_{ac|bd}$. Thus, 
the intersection of $\UE_\kappa (\psi^+)$ with $V_{ac|bd}$ 
is a proper subvariety of $\UE_\kappa (\psi^+)$, and hence of measure zero within it.
Thus, generically, $\Flat_{ac|bd}(\tilde P)$ is full rank.

To construct such a probability distribution, assign any positive lengths to the edges of $\psi^+$ so that it 
becomes ultrametric, and consider the $\kappa$-state JC model on it (with no coalescent process). This 
leads to a distribution $P \in \text{\UE}_{\kappa}(\psi^+)$. Then $\tilde P$ arises from the Jukes-Cantor model 
on the induced rooted 4-taxon tree. Since the JC model is time reversible, $\tilde P$ is also obtained by 
rooting the quartet tree at the MRCA of $a$ and $b$, with non-identity JC Markov matrices on each of the 
5 edges of this rerooted tree. Let $M_a,M_b,M_c,M_d$ 
 denote the Markov matrices on the pendant edges 
and $M_{int}$ on the internal edge, so that
$F=(1/\kappa) M_{int}$ is the distribution of pairs of bases at the endpoints of the internal edge. 
Let  $N_{ac}=M_a\otimes M_c$ and $N_{bd}=M_b\otimes M_d$ denote the Kronkecker products.
Then, following the details of \cite[Section 4]{AR06},
the flattening matrix may be expressed as
$$\Flat_{ac|bd}(\tilde P)=N_{ac}^T D N_{bd},$$ 
where $D$ is a $\kappa^2\times\kappa^2$ diagonal matrix formed from the entries of $F$. 

Since $M_{int}$ is assumed to be a non-identity JC matrix, $F$ has no zero
entries, so $D$ has rank $\kappa^2$. Similarly, the JC transition matrices
 $M_a,M_b,M_c,M_d$ are non-singular, and since the Kronecker product of non-singular
 matrices is non-singular, so are $N_{ac}^T$ and $N_{bd}$. Thus
$\Flat_{ac|bd}$ generically has full rank. 
\end{proof}

The argument in this proof, that generically the ranks of ``wrong'' flattenings of quartet distributions are large, proceeded by constructing an element of the \UE\ model using a parameterized model in the absence of a coalescent process. However, just as was done in Theorem 3.2, we can extend the conclusion to analytic submodels of the \UE\ model, such as those incorporating the coalescent. For instance, since the CK model has the non-coalescent JC model as a limit,
this implies that there are points in the \CK\ model that 
are arbitrarily close to the point $P$ constructed in the proof, which therefore must also have rank $\kappa^2$ flattenings, 
as matrix rank is lower semicontinuous. We can thus obtain the following  generalization of a result from \cite{CK15}.

\begin{thm} \label{thm:svdid} Consider any submodel of the $\UE(\psi^+)$ model 
with an analytic parameterization general enough to 
have the JC model as a limit.
If $\psi^+$ displays the quartet $ab|cd$,
then for all distributions $P$ in the model,
with $\tilde P=P|_{\{a,b,c,d\}}$, $\Flat_{ab|cd}(\tilde P)$ has rank  at most $\binom{\kappa+1}2$, while for generic 
$P$, $\Flat_{ac|bd}(\tilde P)$ and $\Flat_{ad|bc}(\tilde P)$ have rank $\kappa^2$.
\end{thm}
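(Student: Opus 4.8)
The plan is to combine the structural rank bound of Theorem~\ref{thm:rankUEflattenings} with the analytic-limit technique of Theorem~\ref{thm:UEidanalytic}; neither statement here is new in spirit, so the proof is essentially a matter of threading the two earlier arguments together.

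First I would dispose of the upper bound, which requires no genericity at all. By hypothesis the submodel is contained in $\UE(\psi^+)$, so every distribution $P$ in it lies in $\UE_\kappa(\psi^+)$. Since $\psi^+$ displays $ab|cd$, at least one of $\{a,b\}$, $\{c,d\}$ is a $2$-clade on $\psi^+|_{\{a,b,c,d\}}$, and the first assertion of Theorem~\ref{thm:rankUEflattenings} applies verbatim: the induced index-exchange symmetry makes pairs of rows of $\Flat_{ab|cd}(\tilde P)$ coincide, forcing rank at most $\binom{\kappa+1}2$ for \emph{all} $P$ in the submodel.

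For the generic full-rank statement I would mirror the proof of Theorem~\ref{thm:UEidanalytic}. Let $f\colon\Theta\to\UE(\psi^+)$ be the analytic parameterization of the submodel on $\psi^+$, and let
$$V_{ac|bd}=\{\,P\mid \det(\Flat_{ac|bd}(\tilde P))=0\,\}$$
be the hypersurface cut out by the single degree-$\kappa^2$ determinant polynomial, as in the proof of Theorem~\ref{thm:rankUEflattenings}. The composite $\det\bigl(\Flat_{ac|bd}(\widetilde{f(\cdot)})\bigr)$ is an analytic function on $\Theta$, so its zero set $f^{-1}(V_{ac|bd})$ is an analytic subvariety of $\Theta$. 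The whole argument then reduces to showing this subvariety is \emph{proper}, equivalently that $f(\Theta)\not\subseteq V_{ac|bd}$; a proper analytic subvariety of the connected full-dimensional parameter space $\Theta$ has measure zero. I would treat $\Flat_{ad|bc}$ identically and take the union of the two exceptional sets, which is again measure zero.

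The step I expect to be the crux is exhibiting a point of $\Theta$ whose image escapes $V_{ac|bd}$, and this is exactly where the ``JC as a limit'' hypothesis enters. The proof of Theorem~\ref{thm:rankUEflattenings} already produces a JC distribution $P_0$ with $\Flat_{ac|bd}(\tilde P_0)$ of full rank $\kappa^2$; by hypothesis $P_0$ lies in the closure of $f(\Theta)$ rather than necessarily in $f(\Theta)$ itself, so one cannot simply plug it in. The key observation is that full rank is an \emph{open} condition: since matrix rank is lower semicontinuous and $\kappa^2$ is maximal, any distribution sufficiently close to $P_0$ also has a rank-$\kappa^2$ flattening. Choosing $\theta_n\in\Theta$ with $f(\theta_n)\to P_0$, the images $f(\theta_n)$ for large $n$ therefore lie in $f(\Theta)\setminus V_{ac|bd}$, establishing properness. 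Thus the only genuinely new ingredient beyond Theorems~\ref{thm:rankUEflattenings} and~\ref{thm:UEidanalytic} is this semicontinuity passage from the limit point into the model, which is precisely what lets a mere limit of JC distributions, rather than an actual model point, certify the generic behavior.
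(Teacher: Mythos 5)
Your proposal is correct and follows essentially the same route as the paper: the paper likewise obtains the upper bound directly from Theorem~\ref{thm:rankUEflattenings}, and for the generic full-rank claim it uses the non-coalescent JC point constructed there together with lower semicontinuity of rank to produce model points with rank-$\kappa^2$ flattenings, then invokes the analytic-subvariety argument of Theorem~\ref{thm:UEidanalytic}. Your write-up merely makes explicit the properness/measure-zero bookkeeping that the paper leaves compressed in the paragraph preceding the theorem.
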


We note that our proof of this theorem has avoided the explicit calculations and more intricate arguments that appear in 
\cite{CK15} while also establishing the result in a more general setting. This is possible because of our use of a tensor $P$ in the 
closure of the \CK\ model, but not in the \CK\ model, as well as adopting the viewpoint of \cite{AR06} on flattenings as matrix products.

\medskip

Using the two preceding theorems on identifiability, the statistical consistency of the SVDquartets method can be obtained. When Chifman and Kubatko \cite{CK15} proved essentially the same result on ranks of flattenings
for the \CK\ model, they highlighted it as an identifiability result, but did not explicitly make a claim of consistency. 
The consistency result for SVDquartets was then unambiguously stated and proved in this setting in \cite{WascherKubatko2020}, which also gave an analysis of the convergence rate.

Here we show that their argument for the consistency of SVDquartets applies more generally to 
site patterns generated under the \UE\ model, 
as well as many submodels.
In particular, it validates the consistency of
inference under models allowing mixtures of coalescent mixtures which may have different substitution processes 
across the genome, as described in \cite{ALR18}.

To be precise, we must first specify some method of quartet amalgamation $M$, which takes a collection of one quartet tree for each 4-taxon subset of $X$ and produces an unrooted topological tree on $X$. 
In order to establish consistency, we require that
if all quartet trees in the collection given to the method $M$ are displayed on a 
common tree $T$ on $X$, then $M$ returns $T$. 
Following \cite{Warnow18}, we say such a method is \emph{exact}
 while recognizing that for large sets $X$ one generally must use a 
heuristic method  $M'$ that seeks to approximate $M$.

\begin{thm}\label{thm:svdids}
The SVDquartets method, using an exact method to construct a tree from a collection of quartets,  
gives a statistically consistent unrooted species tree topology estimator 
for generic parameters under the \UE\ model, and under any submodel 
with an analytic parameterization 
general enough to 
have the JC model as a limit.
\end{thm}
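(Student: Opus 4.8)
The plan is to reduce the statement to two independent pieces: first, that on each four-taxon subset the SVDquartets score consistently selects the quartet displayed on $\psi^+$; and second, that an exact amalgamation method reconstructs the unrooted tree once every quartet has been correctly recovered. Almost all of the work lies in the first piece, and I would obtain it by combining the rank separation established in Theorems \ref{thm:rankUEflattenings} and \ref{thm:svdid} with the statistical convergence argument of \cite{WascherKubatko2020}; the second piece is immediate from the defining property of an exact method.

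For the per-quartet step, I would fix a four-taxon subset $\{a,b,c,d\}$ with $ab|cd$ displayed on $\psi^+$ and recall that the SVDquartets score of a split $xy|zw$ is the Frobenius distance from $\Flat_{xy|zw}(\tilde P)$ to its best rank-$\binom{\kappa+1}2$ approximation, namely $\big(\sum_{i>\binom{\kappa+1}2}\sigma_i^2\big)^{1/2}$, where $\sigma_1\ge\sigma_2\ge\cdots$ are the singular values of the flattening. At the true distribution, Theorem \ref{thm:rankUEflattenings} (for the \UE\ model) and Theorem \ref{thm:svdid} (for its analytic submodels) give $\operatorname{rank}\Flat_{ab|cd}(\tilde P)\le\binom{\kappa+1}2$, so the displayed split has score exactly $0$, whereas for generic parameters the two non-displayed flattenings have rank $\kappa^2>\binom{\kappa+1}2$, forcing their scores to be bounded below by a strictly positive quantity, their smallest singular value. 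Thus at the truth there is a strictly positive gap between the score of the displayed split and those of its two competitors.

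I would then pass to the statistical limit. As the number of sites $n\to\infty$, the empirical site-pattern frequencies converge in probability to the true distribution $P$ by the law of large numbers, so the empirical flattenings converge entrywise to the true ones; since singular values are Lipschitz functions of the matrix entries (Weyl's inequality), the empirical scores converge to the true scores. Together with the positive gap above, this yields that the displayed split $ab|cd$ attains the strictly smallest empirical score with probability tending to $1$, so SVDquartets returns it. This is precisely the consistency argument of \cite{WascherKubatko2020}, with Theorems \ref{thm:rankUEflattenings} and \ref{thm:svdid} supplying the rank hypotheses in the present, more general setting.

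Finally I would assemble the quartets. For each quartet, the parameters at which the full-rank conclusion fails lie in a proper subvariety of the affine model space (for the \UE\ model, by the determinantal argument in the proof of Theorem \ref{thm:rankUEflattenings}) or in a proper analytic subvariety of $\Theta$ (for an analytic submodel, by Theorem \ref{thm:svdid}); in either case the union over the finitely many quartets is again of measure zero, so generic parameters satisfy the rank separation on every quartet simultaneously. A union bound over the finitely many four-taxon subsets then shows that, with probability tending to $1$, SVDquartets recovers the displayed quartet on all of them at once, and when this happens the recovered quartets are precisely those displayed on the unrooted tree $T$ underlying $\psi^+$, so the exact method returns $T$. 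I expect the main obstacle to be the genericity bookkeeping rather than the statistical limit, which is routine once the rank gap is in hand: one must check that the per-quartet exceptional sets combine to measure zero and that the strictly positive score gap persists for generic parameters of each analytic submodel, both of which follow from Theorems \ref{thm:rankUEflattenings} and \ref{thm:svdid}.
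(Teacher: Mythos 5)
Your proposal is correct and follows essentially the same route as the paper's proof: rank separation from Theorems \ref{thm:rankUEflattenings} and \ref{thm:svdid}, a strictly positive score gap at the true distribution, convergence of empirical flattenings plus continuity of singular values, and a union bound over the finitely many quartets before invoking exactness of the amalgamation method. The only cosmetic difference is your score (Frobenius distance to the best rank-$\binom{\kappa+1}{2}$ approximation) versus the paper's sum of the $\binom{\kappa}{2}$ smallest singular values; both vanish exactly on the low-rank flattening and are continuous, so the argument is unaffected.
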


\begin{proof} To simplify notation in the argument, let  $\Flat_{ac|bd}(P)$ denote the $ac|bd$ flattening of the marginalization  $P|_{\{a,b,c,d\}}$.

By Theorems  \ref{thm:rankUEflattenings} and \ref{thm:svdid} for generic parameters giving a probability distribution $P$ 
in the model and any  four taxa $a,b,c,d$ such that $ab|cd$ is displayed on the unrooted tree $\psi$, $\Flat_{ab|cd}(P)$ has 
rank  at most $\binom{\kappa+1}2$, while $\Flat_{ac|bd}(P)$ and $\Flat_{ad|bc}(P)$ have rank $\kappa^2$. 
This implies that $\Flat_{ab|cd}(P)$ will have at least  $\binom \kappa 2$ singular values of 0, while  
$\Flat_{ac|bd}(P)$ and $\Flat_{ad|bc}(P)$ have all positive singular values. For a finite sample of $s$ 
sites from the model, denote the empirical distribution by $\hat P_s$. Then for any $\epsilon>0$ and any norm
$$\lim_{s\to \infty}\operatorname{Pr}\big(|\hat P_s-P|<\epsilon \big) =1.$$
Since the vector $\sigma(M)$ of ordered singular values of a matrix $M$ is a  continuous function of the 
matrix, this implies that for each $q \in \{ab|cd, ac|bd, ad|bc\}$
$$\lim_{s\to \infty}\operatorname{Pr}\bigg( \| \sigma(\Flat_q(\hat P_s))-\sigma(\Flat_q(P))\|<\epsilon \bigg) =1$$
where $\| \cdot \|$ denotes any vector norm.
With the SVD score $\mu(M)$ defined as the sum of the  $\binom \kappa 2$ smallest singular values 
of a $\kappa^2\times\kappa^2$ matrix $M$, we know
$$0=\mu \big( \Flat_{ab|cd}(P) \big)<\min \bigg\{\mu (\Flat_{ac|bd}(P) ),\ \mu \big(\Flat_{ad|bc}(P) \big) \bigg\}.$$
But it then follows that  
$$\lim_{s\to \infty}\operatorname{Pr}\bigg( \mu(\Flat_{ab|cd}(\hat P_s) \big)<\min \bigg\{\mu(\Flat_{ac|bd}(\hat P_s)),\mu(\Flat_{ad|bc}(\hat P_s))\bigg\} \bigg) = 1.$$
Thus, as the sample size $s$ grows, the probability that choosing the quartet tree on $a,b,c,d$ 
minimizing $\mu$ gives the quartet tree displayed on $\psi$ approaches 1.

Since this probability approaches 1 for each of set of four taxa, and there are only finitely many such sets, 
the probability that all quartet trees inferred by minimizing $\mu$ are displayed on the species tree 
approaches 1. Thus with probability approaching 1, the method $M$ will return the correct species tree.
\end{proof}

\section{Dimension of \UE\ models on large trees}\label{sec:UEdim}

Although the symmetry conditions of the \UE\ model have been expressed as linear constraint equations, 
these constraints are not in general independent, as was shown for a particular 3-taxon species tree in Example \ref{ex:UEmodel}. 
In that example, it was 
easy to determine a basis of constraints, and thus the dimension of the model. 
In this section we investigate larger trees and determine the model dimension.

Knowledge of dimension is important for several reasons. First, it gives us a basic insight into how restrictive the model on a particular tree
topology is. Second, if one is to use these models for tree inference, the dimension is important for judging how close a data point is to fitting the model.
Intuitively, data is conceptualized as coming from a true model point with `noise' added, and if a model has high dimension the noise tends to do less to move that data from the model than if it had lower dimension. Such dimensionality considerations are made rigorous in many model selection criteria, for instance 
the Akaike Information Criterion and Bayesian Information Criterion.

\medskip

For a rooted topological tree $\psi^+$ on taxa $X$ we consider the model  \UE$_\kappa(\psi^+)$. Let ${d}_\kappa(\psi^+)$ 
denote the dimension of the affine space $V(\psi^+)\subset \RR^{\kappa^{|X|}}$ of all tensors satisfying the linear equations 
expressing the symmetry conditions defining the model, as well as that
all entries of the distribution tensor sum to 1 (i.e., the affine, or Zariski, closure of the model). 
By dropping the condition that tensor entries sum to 1, we pass to the 
cone over the model, a linear space $L(\psi^+)$ 
of dimension $c_\kappa(\psi^+)={d}_\kappa(\psi^+)+1$.  We now give a recursive formula for 
computing the dimension $c_\kappa(\psi^+)$.

\begin{thm}\label{thm:UEdim}
For a rooted binary topological tree $\psi^+$ on a taxon set $X$,
let $\psi_A^+$ and $\psi_B^+$ be the rooted subtrees descendant from the child nodes of the root of $\psi^+$, 
on taxa $A$ and $B$ respectively,  so that
$X=A\sqcup B$ and $\psi^+=(\psi_A^+,\psi_B^+)$.  Then 
$$c_\kappa(\psi^+) =
c_\kappa(\psi_A^+)c_\kappa(\psi_B^+)
 - \binom{\kappa}{2}.
$$
\end{thm}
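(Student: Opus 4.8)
The plan is to realize the linear space $L(\psi^+)$ inside the tensor product $\RR^{\kappa^{|A|}} \otimes \RR^{\kappa^{|B|}}$ and to sort the defining symmetry constraints according to where the relevant cherry sits relative to the root split $X = A \sqcup B$. First I would classify, for each $Y$ and each 2-clade $\{x,y\}$ on $\psi^+|_Y$, whether the pair lies entirely in $A$, entirely in $B$, or straddles the root. A short argument on induced subtrees shows that a straddling pair (one taxon in $A$, one in $B$) is a cherry on $\psi^+|_Y$ only when $Y=\{x,y\}$, so the straddling constraints are exactly the conditions that each 2-taxon marginal $P_{\{a,b\}}$ with $a\in A$, $b\in B$ be a symmetric $\kappa\times\kappa$ matrix; meanwhile a within-$A$ pair $\{a,a'\}$ is a cherry on $\psi^+|_Y$ iff it is a cherry on $\psi_A^+|_{Y\cap A}$, with the $B$-taxa of $Y$ merely carried along.

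The central structural step is to show that the within-$A$ and within-$B$ constraints together are exactly the condition $P\in L(\psi_A^+)\otimes L(\psi_B^+)$. Writing $P$ as the matrix $\Flat_{A|B}(P)$ with rows indexed by states of $A$ and columns by states of $B$, I would argue that the strongest within-$A$ constraints (those retaining all of $B$) say precisely that every column, read as a tensor on $A$, lies in $L(\psi_A^+)$, while all weaker within-$A$ constraints follow by marginalizing $B$. This gives $P\in L(\psi_A^+)\otimes \RR^{\kappa^{|B|}}$, and symmetrically the within-$B$ constraints give $P\in \RR^{\kappa^{|A|}}\otimes L(\psi_B^+)$. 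Intersecting via the standard identity $(S\otimes W)\cap(U\otimes T)=S\otimes T$ for subspaces yields $L(\psi_A^+)\otimes L(\psi_B^+)$, of dimension $c_\kappa(\psi_A^+)\,c_\kappa(\psi_B^+)$.

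It then remains to show the straddling constraints cut this space down by exactly $\binom{\kappa}{2}$. I would first observe that the within constraints already force all these marginals to coincide: for $a,a'\in A$ the pair $\{a,a'\}$ is a cherry on $\psi^+|_{\{a,a',b\}}$, so $P_{\{a,a',b\}}$ is $(a,a')$-symmetric, and marginalizing gives $P_{\{a,b\}}=P_{\{a',b\}}$; the analogous statement in $B$ then shows $P_{\{a,b\}}$ equals a single matrix $\mathbf M$ independent of the chosen $a,b$. Hence all straddling constraints collapse to the single requirement that $\mathbf M$ be symmetric, i.e.\ $\binom{\kappa}{2}$ linear conditions. To see these are genuinely independent on $L(\psi_A^+)\otimes L(\psi_B^+)$, rather than already implied, I would show the antisymmetrization map $P\mapsto \mathbf M-\mathbf M^{\mathsf T}$ is onto the space of antisymmetric matrices: on a product $Q\otimes R$ with $Q\in L(\psi_A^+)$, $R\in L(\psi_B^+)$ the marginal is the rank-one matrix $q\,r^{\mathsf T}$, where $q,r$ are the single-taxon marginals; single-taxon marginals of the fully symmetric tensors $v^{\otimes|A|}\in L(\psi_A^+)$ already sweep out all of $\RR^\kappa$, and the antisymmetric parts $q\wedge r$ span $\wedge^2\RR^\kappa$. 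Surjectivity forces the kernel $L(\psi^+)$ to have codimension exactly $\binom{\kappa}{2}$, yielding the formula.

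I expect the main obstacle to be the structural identification of the second paragraph: correctly formulating the within-subtree constraints, which a priori range over all $Y$ and all nested marginals, so that they assemble exactly into $L(\psi_A^+)\otimes L(\psi_B^+)$ with no constraint missed and none spuriously added. The bookkeeping of which cherries survive marginalization, together with the verification that retaining the full complementary block produces the strongest conditions, is where care is needed; by comparison, the collapse of the straddling constraints and the codimension count are comparatively clean.
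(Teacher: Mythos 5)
Your proposal is correct, and its central structural step is the same as the paper's: realize $L(\psi^+)$ inside $\RR^{\kappa^{|A|}}\otimes\RR^{\kappa^{|B|}}$ and identify the constraints not involving the root with $L(\psi_A^+)\otimes L(\psi_B^+)$ (the paper proves this intersection fact via the singular value decomposition of an element of $Z$, while you invoke the standard subspace identity $(S\otimes W)\cap(U\otimes T)=S\otimes T$ — same content). You genuinely diverge in the two supporting steps, though. First, the paper preprocesses the defining equations with Lemma \ref{lem:smallDef}, replacing the full family of constraints by one chosen taxon pair per internal vertex; you bypass that lemma by classifying all defining constraints according to where the cherry sits relative to the root split, noting that straddling cherries arise only from two-element subsets $Y$. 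Second, and more substantially, the codimension count is done differently: the paper shows the $\binom{\kappa}{2}$ root equations are independent of the others by exhibiting, for each pair $i<j$, the variable $x_{ii\cdots i,\,jj\cdots j}$ which occurs in that root equation and in no other equation of its chosen defining set; you instead show that on $Z$ all straddling constraints collapse to symmetry of a single matrix $\mathbf{M}$ (via the within-constraints on triples $\{a,a',b\}$) and then prove the antisymmetrization map $P\mapsto\mathbf{M}-\mathbf{M}^{T}$ carries $Z$ onto the $\binom{\kappa}{2}$-dimensional space of antisymmetric matrices, so rank-nullity gives the drop. Your surjectivity argument is basis-free and makes the geometric meaning of the $\binom{\kappa}{2}$ transparent, and your collapse-of-straddling-constraints observation substitutes directly for part of what the paper's reduction lemma provides; the paper's argument is more combinatorial and explicit, but cannot even be stated cleanly without that lemma. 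One detail to tighten: the single-taxon marginal of $v^{\otimes|A|}$ is $\left(\sum_i v_i\right)^{|A|-1}v$, a nonzero multiple of $v$ only when $v$ has nonzero coordinate sum, so it is cleanest to run the surjectivity argument with $v=e_i$, $w=e_j$ (coordinate sum $1$), whose images $e_ie_j^{T}-e_je_i^{T}$ already span the antisymmetric matrices.
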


\medskip

For a topological rooted species tree $\psi^+$ on $X$, 
we can construct a set of equations defining the cone $L(\psi^+)$ by considering every subset $Y \subseteq X$ and every 2-clade $\{a, b\}$ of each $\psi^+_{|Y}$ as was done in Example \ref{ex:UEmodel}.
However, as we saw in that example, the equations we obtain in this way are not necessarily independent.
As a first step towards proving Theorem \ref{thm:UEdim}, we construct a smaller (though still not necessarily independent) set of linear 
equations defining the cone $L(\psi^+)$. 
This set is defined by associating a set of linear equations to each vertex of the topological rooted
 tree $\psi^+$ on $X$.
Specifically, for each internal vertex $v$ of $\psi^+$ choose two taxa $a ,b$ with $v=\MRCA(a,b)$. 
Let  $P$  be a $|X|$-dimensional $\kappa \times\cdots\times \kappa$ tensor of indeterminates, with 
indices corresponding to taxa in $X$ and let $P_{a b }$ denote the marginalization of $P$ over 
all indices corresponding to taxa in $\desc(v)\setminus \{a ,b\}$. 
Each choice of the indices corresponding to taxa in $X \setminus \desc(v)$ determines a matrix 
slice of $P_{a b}$, with indices corresponding to $a ,b$. 
Expressing that each of these slices is symmetric yields a collection of linear equations.
Denote this set of equations by $\mathcal S_v=\mathcal S(\psi^+,\{a,b\})$.
Though the set $\mathcal S_v$ will depend on the particular pair of taxa $(a,b)$ chosen,
for our purposes the particular pair is irrelevant, so one can designate any consistent rule 
for selecting the pair $(a,b)$ so that the $\mathcal S_v$ are well-defined. 
If $v$ is not an internal vertex of $\psi^+$, define $S_v$ to be the empty set.

\begin{lemma}  \label{lem:smallDef} 
Let $\psi^+$ be a topological rooted tree on $X$. 
Then the set  
$$\mathcal{S} = \displaystyle \bigcup_{v \in V(\psi^+)} \mathcal{S}_v$$
defines the cone $L(\psi^+)$.
\end{lemma}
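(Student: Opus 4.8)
The plan is to show the two inclusions $L(\psi^+)\subseteq \mathrm{Sol}(\mathcal S)$ and $\mathrm{Sol}(\mathcal S)\subseteq L(\psi^+)$, where $\mathrm{Sol}(\mathcal S)$ is the solution space of the system $\mathcal S$ and $L(\psi^+)$ is cut out by the full collection $\mathcal F$ of symmetry equations ranging over all $Y\subseteq X$ and all 2-clades of $\psi^+|_Y$. The first inclusion is immediate: for an internal vertex $v$ with chosen pair $\{a,b\}$, set $Y_v=(X\setminus\desc(v))\cup\{a,b\}$. Since $v=\MRCA(a,b)$ and all other descendants of $v$ have been marginalized away, $\{a,b\}$ is a 2-clade of $\psi^+|_{Y_v}$, and the equations $\mathcal S_v$ assert exactly that the marginalization $P_{Y_v}$ is invariant under exchanging the $a,b$ indices. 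Hence every equation of $\mathcal S$ already belongs to $\mathcal F$, so $L(\psi^+)=\mathrm{Sol}(\mathcal F)\subseteq\mathrm{Sol}(\mathcal S)$.

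For the reverse inclusion I would first reduce to a cleaner statement. Given $P\in\mathrm{Sol}(\mathcal S)$ and an arbitrary constraint of $\mathcal F$, namely a set $Y$ with a 2-clade $\{a,b\}$ of $\psi^+|_Y$, let $v=\MRCA(a,b)$. Because $\{a,b\}$ is a cherry on the restricted tree, no other taxon of $Y$ descends from $v$, so $\desc(v)\cap Y=\{a,b\}$ and therefore $Y\subseteq (X\setminus\desc(v))\cup\{a,b\}$. Since further marginalization preserves symmetry in the retained indices, it suffices to prove the following \emph{Claim}: for every internal vertex $v$ and every pair $\{a,b\}$ with $\MRCA(a,b)=v$, the tensor $P_{(X\setminus\desc(v))\cup\{a,b\}}$ is symmetric in its $a,b$ indices. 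The hypothesis $\mathcal S_v$ yields the Claim only for the single chosen pair, so the real work is to upgrade it to every pair with MRCA $v$.

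I would prove the Claim by induction on $|\desc(v)|$. The base case, in which the children of $v$ are both leaves, is exactly $\mathcal S_v$. For the inductive step, write $\desc(v)=A\sqcup B$ for the two child subtrees; any pair with MRCA $v$ has one taxon in each part, so I can pass from the chosen pair $\{a_v,b_v\}$ to an arbitrary $\{a,b\}$ by changing the representative on one side at a time. The engine is a change-of-representative step: if $a,a_v\in A$ with $w=\MRCA(a,a_v)\prec v$, the induction hypothesis applied at $w$ gives that $P_{(X\setminus\desc(w))\cup\{a,a_v\}}$ is symmetric in $a,a_v$. Since $\desc(w)\subseteq A$, the opposite representative $b_v$ and the whole context $X\setminus\desc(v)$ survive in $X\setminus\desc(w)$, so I may marginalize this symmetric tensor down to keep only $a$ (respectively only $a_v$) together with that common context; the swapping-symmetry forces the two one-sided marginals to be the \emph{same} tensor, i.e. $P_{(X\setminus\desc(v))\cup\{a,b_v\}}=P_{(X\setminus\desc(v))\cup\{a_v,b_v\}}$ after identifying $a$ with $a_v$. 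Thus symmetry of the latter gives symmetry of the former. Applying this once on the $A$-side and once on the $B$-side transports the symmetry from $\{a_v,b_v\}$ to $\{a,b\}$, completing the induction; the Claim then delivers $\mathrm{Sol}(\mathcal S)\subseteq L(\psi^+)$.

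The main obstacle is precisely this inductive change-of-representative lemma. Two points need care: that all the index sets nest correctly so the intermediate marginalizations make sense (handled by $\desc(w)\subseteq A$ and transitivity of marginalization), and that ``symmetric in two indices'' genuinely forces the two single-index marginals to coincide as tensors. Both are routine once the bookkeeping is arranged, and everything else is a matter of combining the two inclusions.
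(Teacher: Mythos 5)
Your proposal is correct and takes essentially the same route as the paper's proof: both reduce the full family of symmetry constraints to the ``maximal'' slice symmetries at each internal vertex, and then transport symmetry from the chosen pair to an arbitrary pair by changing one representative at a time, via an induction over the tree (yours on $|\desc(v)|$, the paper's from leaves to root) whose engine is exactly the symmetry supplied, by the inductive hypothesis, at the $\MRCA$ of the old and new representatives. Your write-up merely makes explicit some marginalization bookkeeping that the paper leaves implicit, so the two arguments coincide in substance.
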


\begin{proof}
It is enough to show that if $v=\MRCA(a, b)=\MRCA(a,c)$, then the linear equations expressing symmetry 
of slices of $P_{ac}$ are contained in the span of those expressing symmetry of slices of $P_{ab}$ together 
with those equations in $\mathcal S$ arising from nodes descended from $v$. We show this inductively, proceeding 
from the leaves of the tree to the root. The base case, when $v$ has only two leaf descendants, is trivial. 
   Assume the result holds for the internal nodes descended from $v$. Let the children of $v$ be 
   $v_1$, which is ancestral to or equal to $a$, and $v_2$, 
   which is ancestral to $b,c$ since $\psi^+$ is binary. 
   Then $w=\MRCA(b,c)$ is a descendent of $v_2$. The equations arising from $w$ express that any 
   entry of the marginalization of $P$ over all descendants of $w$ except $b, c$ is invariant under 
   exchanging the $b, c$ indices. Since the entries of $P_{ab}$ arise from further marginalization, 
   the equations expressing symmetry of the $ab$-slices together with those arising from $w$ imply  
   those expressing the $ac$-slices of $P_{ac}$ are symmetric.
\end{proof}

The proof of the previous lemma explains the dependence of the equations we see in Example \ref{ex:UEmodel}. The $\{a,b,c\}$ constraints are the equations arising from MRCA$(b,c)$, which 
in that example, required no marginalization of $P$. The $\{a,b\}$ constraints are the equations arising from the root of the tree that express symmetry of $P_{ab}$ which are obtained by marginalizing $P$ over $c$. Together, these constraints imply the $\{b,c\}$ and $\{a,c\}$ constraints,
the latter of which express symmetry in the slices of $P_{ac}$.

\begin{proof}[Proof of Theorem \ref{thm:UEdim}]
Let $n_A=|A|$ and $n_B=|B|$.
With $U= \mathbb{R}^{\kappa^{n_A}}$ and $V=\mathbb{R}^{\kappa^{n_B}}$, we identify $W=U\otimes V=\mathbb{R}^{|X|}$ with the space
of $k^{n_A}\times k^{n_B}$ real matrices. 
In particular, we have $L(\psi_A^+)\subset U$,  $L(\psi_B^+)\subset V$, and $L(\psi^+)\subset W$.

We first claim that $L(\psi_A^+)\otimes L(\psi_B^+)$ is the subspace $Z\subset W$ defined by the subset $\mathcal S'$ of 
$\mathcal S = \mathcal S(\psi^+)$ of Lemma \ref{lem:smallDef} containing only those equations arising from non-root internal nodes of $\psi^+$.

To see $L(\psi_A^+)\otimes L(\psi_B^+)\subseteq Z$, consider an equation in $\mathcal S'$ associated to a non-root node $v$ and its 
descendant taxa $a,b$ as in the lemma. Without loss of generality, we may assume $v$ is a node of $\psi_A$. Then, ordering the 
taxa so that $a,b$ are the first two, this equation in $\mathcal S'$ has the form
\begin{equation}\label{eq:xx}
\sum_ {\alpha_1} x_{(i,j,\alpha_1,\alpha_2),\beta} -\sum_ {\alpha_1} x_{(j,i,\alpha_1,\alpha_2),\beta} =0
\end{equation}
where the summation over $\alpha_1\in [k]^m$ runs through all assignments of states to taxa descended from $v$ other than 
$a,b$ , $\alpha_2\in [k]^{n_A-2-m}$
 is a fixed choice of states for taxa in $A$ not descended from $v$, $\beta\in [k]^{n_B}$ is a
 fixed choice of states for the taxa in $B$, and $i\ne j$. This equation expresses that column $\beta$ of a matrix in $W$ satisfies 
 an equation associated to $v$, $a$, and $b$ in the definition of $L(\psi_A^+)$. Thus it holds on all of $L(\psi_A^+)\otimes L(\psi_B^+)$, and we 
 obtain the desired inclusion.

To see $L(\psi_A^+)\otimes L(\psi_B^+)\supseteq Z$, note that equation \eqref{eq:xx} has shown that every 
column of $z\in Z$ lies in $L(\psi_A^+)$, and likewise every row of $z$ lies in $L(\psi_B^+)$. But from the singular value decomposition of $z$, 
$$z=\sum_i c_i\otimes r_i$$
where the $c_i$ form a basis for the column space of $z$ and the $r_i$ form a basis for the row space of $z$. Since $c_i\in L(\psi_A^+)$ and
$r_i\in L(\psi_B^+)$, it follows that $z\in L(\psi_A^+)\otimes L(\psi_B^+)$, establishing the stated inclusion and that $Z = L(\psi_A^+)\otimes L(\psi_B^+)$.

Now the space $L(\psi^+)$ is the subset of $Z=L(\psi_A^+)\otimes L(\psi_B^+)$ defined by the equations in $\mathcal S\setminus \mathcal S'$, 
associated to the root of $\psi$. To conclude that
$$c_\kappa(\psi^+) =
c_\kappa(\psi_A^+)c_\kappa(\psi_B^+)
 - {\kappa \choose 2},
$$
it is enough to show that we can obtain an independent set of equations defining $L(\psi^+)$ by taking an independent set 
defining $Z$ and augmenting it by $\kappa \choose 2$ additional independent equations associated to the root. 

Let $\mathcal L$ be any independent subset of  equations in $\mathcal S'$ that define $Z$, and $\mathcal M=\mathcal S\setminus \mathcal S'$ 
the set of ${\kappa \choose 2}$ equations associated to 
the root of $\psi^+$ (and the choice of $a\in A$ and  $b\in B$).
Then $\mathcal L\cup \mathcal M$ defines $L(\psi^+)$. To see that $\mathcal L\cup \mathcal M$ is independent, first order indices so that 
$a$ and $b$ indices are  listed first among $A$ and $B$.
Then, using `+' in an index to denote the sum over the assignment of all states
$[\kappa] = \{1, 2, \dots, \kappa\}$ in that index, 
for any $1 \leq i<j \leq k$, $$x_{i+\ldots+,\; j+\ldots+} - x_{j+\ldots+,\; i+\ldots+}=0$$ 
must be the unique element of $\mathcal M$ that involves the variable $x_{ii\cdots i,\; jj\cdots j}$ (noting that each equation in $\mathcal L$ involves variables that 
have at least two distinct entries in the indices for $A$ or two distinct entries in the indices for $B$).
Since $\mathcal L$ is an independent set, this implies $\mathcal L\cup \mathcal M$ is independent.
\end{proof}

The theorem gives insight into model dimensions for families of `extreme' topologies:
rooted caterpillars and fully balanced shapes.

\begin{cor} \label{cor:catdim} Suppose $\psi^+$ is a rooted caterpillar tree on $n$ taxa. Then the 
dimension of the \UE$_\kappa(\psi^+)$ model  is
$${d}_\kappa(\psi^+)=\frac{\kappa^n+\kappa}2 -1.$$
\end{cor}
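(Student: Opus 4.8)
The plan is to apply the recursive formula of Theorem~\ref{thm:UEdim} directly. A rooted caterpillar on $n$ taxa decomposes at the root into a single leaf $\psi_A^+$ (a $1$-taxon tree on $A$ with $|A|=1$) and a rooted caterpillar $\psi_B^+$ on the remaining $n-1$ taxa. First I would record the base and trivial cases: for a single taxon, the cone $L(\psi^+)$ is all of $\RR^{\kappa}$, so $c_\kappa = \kappa$ with no symmetry constraints, since a one-leaf tree has no internal vertices. For this to feed the recursion I need $c_\kappa$ of a one-taxon tree to equal $\kappa$, which matches the formula $d_\kappa = \tfrac{\kappa^1+\kappa}{2}-1$ only in a degenerate sense, so I would simply state the cone dimension $c_\kappa = \kappa$ for a leaf and use it as the initial datum.

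Next I would set up the recursion on the cone dimensions. Writing $c_n := c_\kappa(\text{caterpillar on } n \text{ taxa})$, Theorem~\ref{thm:UEdim} with $\psi_A^+$ a single leaf (so $c_\kappa(\psi_A^+)=\kappa$) and $\psi_B^+$ the caterpillar on $n-1$ taxa gives the linear recurrence
\begin{equation*}
c_n = \kappa\, c_{n-1} - \binom{\kappa}{2},
\end{equation*}
valid for $n \geq 2$, with initial value $c_1 = \kappa$. The main work is solving this first-order linear recurrence. The homogeneous solution is a constant multiple of $\kappa^{n}$, and for the particular (constant) solution I would set $c_n = c^* $ constant, solving $c^* = \kappa c^* - \binom{\kappa}{2}$, which yields $c^* = \binom{\kappa}{2}/(\kappa-1) = \kappa/2$.

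I would then write the general solution as $c_n = C\kappa^{n} + \kappa/2$ and fix the constant $C$ using $c_1 = \kappa$, giving $C\kappa + \kappa/2 = \kappa$, hence $C = 1/2$. This produces the closed form $c_n = \tfrac{1}{2}\kappa^{n} + \tfrac{\kappa}{2} = \tfrac{\kappa^n+\kappa}{2}$. Finally, passing from the cone dimension back to the affine dimension of the model via $d_\kappa(\psi^+) = c_\kappa(\psi^+) - 1$ as defined in Section~\ref{sec:UEdim}, I obtain
\begin{equation*}
d_\kappa(\psi^+) = \frac{\kappa^n + \kappa}{2} - 1,
\end{equation*}
as claimed.

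I do not expect a serious obstacle here, since the recursion is immediate from Theorem~\ref{thm:UEdim} and the solution of a first-order linear recurrence is routine; the only point requiring care is the base case. Specifically, I must verify that the formula of Theorem~\ref{thm:UEdim} remains valid when one child subtree is a single leaf, which requires only that the leaf contributes its full cone dimension $\kappa$ with no constraints and that the $\binom{\kappa}{2}$ root equations are genuinely independent in that setting. Since the independence argument in the proof of Theorem~\ref{thm:UEdim} needs only one taxon $a\in A$ and one taxon $b\in B$ to form the distinguished root equations, and a single leaf supplies such an $a$, the hypotheses are met, and the induction goes through cleanly.
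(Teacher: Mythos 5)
Your proof is correct and follows essentially the same route as the paper: both decompose the caterpillar at the root into a single leaf and an $(n-1)$-taxon caterpillar, apply Theorem~\ref{thm:UEdim} to get the recursion $c_n = \kappa\, c_{n-1} - \binom{\kappa}{2}$ with $c_1=\kappa$, and then pass to $d_\kappa(\psi^+) = c_\kappa(\psi^+)-1$; the only difference is that you solve the first-order recurrence explicitly while the paper verifies the closed form by induction, which is a purely cosmetic distinction. One small note: your base case is not ``degenerate''---the stated formula gives $d_\kappa = \frac{\kappa+\kappa}{2}-1 = \kappa-1$, i.e.\ $c_1 = \kappa$, which matches the unconstrained space of base distributions on a single taxon exactly, just as the paper observes.
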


\begin{proof}
If $n=1$, then the model is simply a base distribution for the sole taxon, so $d_\kappa(\psi^+)=\kappa -1,$ 
consistent with the stated formula.
Now inductively assume the stated formula for the rooted caterpillar on $n-1$ taxa. Then by 
Theorem \ref{thm:UEdim}, for $n$ taxa
$$c_\kappa(\psi^+)=\left(\frac{\kappa^{n-1}+\kappa}2\right)\kappa -\binom \kappa2=
\left( \frac{\kappa^{n}+\kappa^2}2 \right) -\left(\frac {\kappa^2-\kappa}2 \right)=
\frac{\kappa^n+\kappa}2,
$$  and the claim follows.
\end{proof}

Also from Theorem \ref{thm:UEdim} we can compute that the dimension of the \UE\ model on the 4-taxon balanced tree $((a,b),(c,d))$ is
$$d_\kappa = \left ( \frac {\kappa^2+\kappa}{2}\right)^2-\binom \kappa 2-1=   \frac {\kappa(\kappa^3+2\kappa^2-\kappa+2)}4 -1.$$ 
By comparing the dimensions for the 4-taxon caterpillar and balanced trees, we see that $d_k$ depends on the rooted tree topology, and 
not only on the number of taxa.

More generally, for a fully balanced tree $\psi^+$ on $n=2^\ell$ taxa, Theorem \ref{thm:UEdim} yields that
$${d}_\kappa(\psi^+)=\co\left ( \left ( \frac {\kappa(\kappa+1)}2 \right )^{n/2} \right ).$$
Thus for fully balanced trees the dimension is $o(\kappa^n/2)$, while 
for rooted caterpillars on $n$ taxa, Corollary
\ref{cor:catdim} shows the dimension is asymptotic to $\kappa^n/2$. 
For a fixed number of taxa $n = 2^\ell$, it follows that 
the dimension of the balanced tree model will be smaller than that of the caterpillar.

This comparison of model dimension for caterpillars and balanced trees is intuitively plausible, as cherries on the 
full tree lead to more symmetry requirements on a tensor than do cherries on subtrees. In general, the more balanced a tree is, 
the smaller one might expect the model dimension to be. This leads us to pose the following
conjectures, where $RB(n)$ denotes the set of rooted binary $n$-leaf trees.

\begin{conj}
For all $\kappa$, there exists an $m$, such that for $n \geq m$, 
 ${d}_\kappa(\psi^+)$ is maximized over $\psi^+ \in RB(n)$ when $\psi^+$ is the $n$-leaf caterpillar tree. 
\end{conj}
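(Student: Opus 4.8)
The plan is to work throughout with the cone dimension $c_\kappa(\psi^+) = d_\kappa(\psi^+)+1$, since maximizing $d_\kappa$ over $RB(n)$ is the same as maximizing $c_\kappa$, and to exploit the recursion of Theorem \ref{thm:UEdim}. The crucial structural feature is that the penalty $\binom{\kappa}{2}$ subtracted at the root is a constant, independent of the tree shape, so that maximizing $c_\kappa(\psi^+) = c_\kappa(\psi_A^+)c_\kappa(\psi_B^+) - \binom{\kappa}{2}$ is equivalent to maximizing the product $c_\kappa(\psi_A^+)c_\kappa(\psi_B^+)$. I would then prove by strong induction on $n$ that the caterpillar maximizes $c_\kappa$ over $RB(n)$.

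First I would record the elementary a priori bound $c_\kappa(\psi^+)\ge \kappa$ for every rooted binary $\psi^+$, which follows by a one-line induction from the recursion using $\kappa^2 - \binom{\kappa}{2} = \tfrac{\kappa(\kappa+1)}{2}\ge \kappa$. This positivity is exactly what licenses the product step: for a root split into subtrees on $p$ and $q=n-p$ leaves, the two factors $c_\kappa(\psi_A^+)$ and $c_\kappa(\psi_B^+)$ range \emph{independently} over $RB(p)$ and $RB(q)$, so $\max c_\kappa(\psi_A^+)c_\kappa(\psi_B^+) = \big(\max_{RB(p)} c_\kappa\big)\big(\max_{RB(q)} c_\kappa\big)$, and the induction hypothesis (applicable since $p,q<n$) says each factor is maximized by a caterpillar. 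Hence the best tree with a given root split has the form $(\psi_p,\psi_q)$ with $\psi_p,\psi_q$ caterpillars, and by Corollary \ref{cor:catdim} its cone dimension is $\tfrac{1}{4}\big(\kappa^n + \kappa^{p+1} + \kappa^{q+1} + \kappa^2\big) - \binom{\kappa}{2}$.

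The remaining step is to optimize over the split $p+q=n$, $1\le p\le q$, which reduces to maximizing $\kappa^p + \kappa^q$. Since $\kappa^p + \kappa^{n-p} - \big(\kappa^{p+1} + \kappa^{n-p-1}\big) = (\kappa-1)\big(\kappa^{\,n-p-1} - \kappa^p\big) > 0$ for $\kappa\ge 2$ and $p < (n-1)/2$, the quantity $\kappa^p + \kappa^{n-p}$ is strictly decreasing as the split becomes more balanced, so the maximum is attained at the extreme split $p=1$, $q=n-1$. That split is precisely the $n$-leaf caterpillar $(\text{leaf}, \psi_{n-1})$, closing the induction. The base cases $n\le 3$ are vacuous since $RB(n)$ is a singleton.

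I expect the only delicate points to be the bookkeeping that makes the product step fully rigorous — the positivity lemma together with the observation that the two root subtrees vary independently over their shape sets — and the convexity estimate that pins down the extreme split. Notably, this argument appears to deliver the conclusion for \emph{every} $n$, and strictly so as soon as $RB(n)$ contains more than one shape (that is, $n\ge 4$), since both the inductive replacements and the convexity inequality are strict for $\kappa\ge 2$. Thus the plan would actually establish the stronger statement that the caterpillar is the unique maximizer with $m$ as small as $4$, rendering the cautious ``there exists $m$'' formulation unnecessary; the one thing worth verifying carefully is precisely this strictness, ruling out any non-caterpillar configuration tying the caterpillar.
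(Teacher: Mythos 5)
Your statement is one the paper does not actually prove: it appears only as a conjecture (hedged with ``there exists an $m$''), so there is no proof of record to compare against. Judged on its own terms, your argument is correct and essentially complete once the routine details are written out, and it resolves the conjecture in the stronger form you state (every $n$, with the caterpillar as the unique maximizing shape). Everything rests on two results the paper does prove, Theorem \ref{thm:UEdim} and Corollary \ref{cor:catdim}, and your three key steps all check out: (i) the positivity bound $c_\kappa(\psi^+)\ge\kappa$ follows by a one-line induction since $\kappa^2-\binom{\kappa}{2}=\binom{\kappa+1}{2}\ge\kappa$, and positivity is exactly what justifies $\max(xy)=(\max x)(\max y)$ when $x,y$ range over independent sets of positive numbers; (ii) the two root subtrees do range independently over $RB(p)$ and $RB(q)$, and $c_\kappa$ depends only on the shape, so the inductive replacement of each by a caterpillar is legitimate; (iii) the split optimization is a correct discrete convexity computation, and the extreme split $(1,n-1)$ glued to a caterpillar on $n-1$ leaves is precisely the $n$-leaf caterpillar, which closes the induction. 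Your intermediate formula $\tfrac14\bigl(\kappa^n+\kappa^{p+1}+\kappa^{q+1}+\kappa^2\bigr)-\binom{\kappa}{2}$ is consistent with the paper's own computations: at $p=1$ it reduces to $\tfrac{\kappa^n+\kappa}{2}$, and at $(p,q)=(2,2)$ it reproduces the paper's value for the $4$-leaf balanced tree. Two remarks. First, state uniqueness as uniqueness of the \emph{shape}: relabeling leaves never changes the dimension, so no labeled tree is distinguished. Second, it is worth observing why your method does not also dispatch the paper's companion conjecture on minimization: there the induction would need the minimizer over $RB(p)$ for every $p<n$, not just powers of two, and no analogue of Corollary \ref{cor:catdim} identifies those shapes, whereas caterpillars exist for every $p$. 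That asymmetry is presumably why the authors hedged both statements; your argument shows the hedge is unnecessary for the maximization one.
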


\begin{conj}
For all $\kappa$, there exists an $m$, such that for $\ell \geq m$,  ${d}_\kappa(\psi^+)$ is minimized over $\psi^+ \in RB(2^\ell)$
when $\psi^+$ is the $2^\ell$-leaf balanced tree.
\end{conj}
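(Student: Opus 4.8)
The plan is to work entirely with the cone dimensions, since minimizing $d_\kappa(\psi^+) = c_\kappa(\psi^+) - 1$ is the same as minimizing $c_\kappa(\psi^+)$. Write $b = \binom{\kappa}{2}$ and set $c^*(n) = \min_{\psi^+ \in RB(n)} c_\kappa(\psi^+)$. Because the map $(x,y)\mapsto xy - b$ of Theorem \ref{thm:UEdim} is strictly increasing in each argument on the positive reals and the $c_\kappa$ values are positive, the minimum over shapes factors through the split sizes, giving the scalar recursion
\begin{equation*}
c^*(n) = \min_{1 \le k \le n-1} c^*(k)\,c^*(n-k) - b, \qquad c^*(1) = \kappa .
\end{equation*}
In particular the globally optimal tree on $n$ leaves is built from optimal subtrees, so by induction the conjecture reduces to a single statement at each dyadic scale: for $n = 2^\ell$ with $\ell$ large, the minimizing split is the balanced one $k = 2^{\ell-1}$, i.e. $c^*(2^{\ell-1})^2 \le c^*(k)\,c^*(2^\ell - k)$ for all $k$. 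Granting this at all scales $\le \ell$, the induction yields $c^*(2^\ell) = G_\ell$, where $G_\ell = G_{\ell-1}^2 - b$, $G_0 = \kappa$, and the fully balanced tree is the unique minimizer.

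Next I would linearize the multiplicative recursion. Submultiplicativity of $c^*(n)/\kappa^n$ makes $n\log_2\kappa - \log_2 c^*(n)$ superadditive, so by Fekete's lemma $\gamma = \sup_n \tfrac1n\bigl(n\log_2\kappa - \log_2 c^*(n)\bigr)$ is finite and equals the limit; set $B = \kappa\,2^{-\gamma}$, so that $c^*(n)\ge B^n$ for all $n$. Introducing the nonnegative excess $g(n) = \log_2 c^*(n) - n\log_2 B \ge 0$, the $B^n$ factors cancel and the balanced-split inequality becomes the scale-free form
\begin{equation*}
g(k) + g(2^\ell - k) \ge 2\,g(2^{\ell-1}) \qquad (k \ne 2^{\ell-1}),
\end{equation*}
so the entire problem reduces to comparing values of the excess function $g$ (with strict inequality giving uniqueness of the minimizer).

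I would then establish two estimates. Upstairs, using $G_\ell = G_{\ell-1}^2 - b$ together with the (to-be-proved) fact that balanced trees attain the optimal rate $B$, the balanced excess is $g(2^{\ell-1}) = 2^{\ell-1}\sum_{i \ge \ell}\epsilon_i 2^{-i}$ with $\epsilon_i = -\log_2(1 - b/G_{i-1}^2)$, which decays doubly-exponentially, like $B^{-2^\ell}$. Downstairs, for any $k \ne 2^{\ell-1}$ the larger part $2^\ell - k$ lies strictly between $2^{\ell-1}$ and $2^\ell$ and hence is never a power of $2$; I would prove that $g$ is comparatively large there, namely $\min_{2^{\ell-1} < j < 2^\ell} g(j) \ge (1-o(1))\,g(2^{\ell-2})$, with the minimum approached at $j = 3\cdot 2^{\ell-2}$. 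Since $g(2^{\ell-2})/g(2^{\ell-1}) \approx B^{2^{\ell-1}} \to \infty$, these combine to give $g(k) + g(2^\ell - k) \ge g(2^\ell - k) \gg 2\,g(2^{\ell-1})$ for all large $\ell$, which is exactly the required inequality, with a margin growing in $\ell$; this produces the threshold in the statement.

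The hard part will be the lower bound on $g$, which appears in two guises that are fundamentally one difficulty: superadditivity only bounds $g$ from above (it is subadditive), so separating $g$ from the doubly-exponentially small balanced values needs new input. Concretely one must (i) prove balanced trees are rate-optimal, $B = \lim_\ell G_\ell^{1/2^\ell}$, whose natural induction $c^*(n) = c^*(k^*)c^*(n-k^*) - b \ge B^n - b$ falls short of $B^n$ by exactly the additive $b$ and so has essentially no slack at the balanced configuration; and (ii) propagate a lower bound through $g(j) = g(k^*) + g(j-k^*) - \delta_j$, where $\delta_j = -\log_2\bigl(1 - b/(c^*(k^*)c^*(j-k^*))\bigr) > 0$, controlling $\min_{2^{\ell-1}<j<2^\ell} g(j)$ while keeping the accumulated corrections $\delta_j$ from eroding the bound below the scale $g(2^{\ell-2})$. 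I expect (i)--(ii) to require a carefully coupled strong induction that simultaneously tracks the optimal split, a two-sided estimate of $c^*(n)$ against $B^n$, and the location of the minimizer of $g$ within each dyadic block; the tightness of the rate bound is presumably why the statement is posed only as a conjecture and only for $\ell$ beyond a threshold.
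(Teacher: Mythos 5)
First, note that this statement is posed in the paper as a conjecture: the paper offers no proof of it, so your argument must stand entirely on its own, and it does not. What you have written is a research plan whose load-bearing steps are explicitly deferred. The parts you do establish are correct but essentially formal: since $(x,y)\mapsto xy-\binom{\kappa}{2}$ is increasing in each positive argument, Theorem \ref{thm:UEdim} does yield the dynamic-programming recursion $c^*(n)=\min_{k} \bigl(c^*(k)\,c^*(n-k)\bigr)-\binom{\kappa}{2}$ with optimal trees built from optimal subtrees; submultiplicativity of $c^*$ plus Fekete's lemma does produce a rate $B$ with $c^*(n)\ge B^n$; and the balanced-split condition is correctly equivalent to $g(k)+g(2^\ell-k)\ge 2g(2^{\ell-1})$ for the excess $g(n)=\log_2 c^*(n)-n\log_2 B$. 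But this is a rephrasing of the conjecture, not progress toward it.

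The genuine gap is the pair of lower bounds you label (i) and (ii), and it is not a routine technicality. The only structural fact you have extracted, subadditivity, bounds $g$ from \emph{above}; both (i) (that the balanced recursion $G_\ell=G_{\ell-1}^2-\binom{\kappa}{2}$ attains the optimal rate, $B=\lim_\ell G_\ell^{1/2^\ell}$) and (ii) (that $g(j)\ge(1-o(1))\,g(2^{\ell-2})$ for all $j$ strictly between $2^{\ell-1}$ and $2^\ell$) are \emph{lower} bounds on $c^*$, and you offer no mechanism for proving any nontrivial lower bound --- neither an independent combinatorial or algebraic estimate of the cone dimension for an arbitrary tree shape (say, exhibiting a large linearly independent family inside $L(\psi^+)$), nor an induction whose hypotheses are strong enough to self-propagate through the corrections $\delta_j$. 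Note also that (i) is not a lemma ``on the way'' to the result: if $B<\lim_\ell G_\ell^{1/2^\ell}$, i.e.\ some other family of shapes achieves a strictly smaller growth rate, then the conjecture is false; so (i) is essentially an asymptotic form of the statement being proved, and the doubly-exponential decay of $g(2^{\ell-1})$ on which your final comparison rests is conditional on it. As it stands, your proposal proves only the (correct, and worth keeping) reduction of the conjecture to the scale-free inequality on $g$; the conjecture remains open after your argument exactly where it was open before it.
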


\section{A genomic model of site patterns on general trees
}\label{sec:genmodel}

In this section, we examine a generalization of the CK model to non-ultrametric trees 
to motivate an algebraic model that encompasses it. 
Marginalizations (respectively, slices) of a site pattern probability tensor will be denoted by placing a `+' 
(resp. $k$) in the index summed over (resp. conditioned on). 
The transpose operator will be denoted with an exponent `$T$.'
For example, we can generalize the equations derived in Example \ref{ex:UEmodel} for the \UE\ model on the ultrametric 3-leaf rooted tree $(a, (b,c))$ for any value of $\kappa$ using this notation as follows:

\begin{center}

\begin{minipage}{4cm}
\begin{enumerate}
\item[(1)] \label{cond:symm} $P_{k \cdot \cdot}=P_{k \cdot \cdot}^T$, 
\item[(2)]  \label{cond:marg}
$P_{+\cdot \cdot}=P_{+\cdot\cdot}^T$.
\end{enumerate}
\end{minipage}
%
\begin{minipage}{4cm}
\begin{enumerate}
\item[(3)]  $P_{\cdot\cdot+}=P_{\cdot\cdot+}^T$,
\item[(4)]  $P_{\cdot+\cdot}=P_{\cdot+\cdot}^T$.
\end{enumerate}
\end{minipage}

\end{center}
In the definition of the \UE\ model, these constraints arise from the taxon subsets 
$(1) \, \{a,b,c\}$, $(2) \, \{b,c\}$, $(3)\,  \{a,b\}$ and $(4)\,  \{a,c\}$, and it is not hard
to see that the equations in (1) and (3) imply those in (2) and (4), just as in Example
\ref{ex:UEmodel}.

\subsection{The Extended Exchangeability Model}
\label{sec:EE}

In \cite{LK18}, the \CK\ model is extended to permit non-ultrametricity of the species tree. This extension allows, for instance, the modeling of relationships between species when generation times or scalar mutation rates differ across 
populations in the tree. 
In this same work, flattening matrices are used to establish the generic identifiability of the 
unrooted species tree topology of the extended model from which it follows  
that SVDquartets is still a statistically consistent 
method of inference of the unrooted species tree topology for these models when combined with any exact method of quartet amalgamation.

In order to motivate our algebraic model, 
first consider a model obtained from the \CK\ by dropping the ultrametricity requirement 
on the species tree. 
Suppose $a$ and $b$ are taxa in a 2-clade on $\sigma^+$, and let $v$ be their common parental node. In the 
special case that the edge lengths of $e_a=(v,a)$ and $e_b=(v,b)$ equal, then the lineages $a$ and $b$ would be 
exchangeable under this site pattern model as shown for the \CK\ model. 
Thus, for this particular tree the site pattern distribution 
can be viewed as a tensor with symmetry in the $a$ and $b$ coordinates. On a general species tree, however, 
where $e_a$ and $e_b$ may have different lengths and mutation rates may not be consistent, 
all sites evolve over those edges according to the transition matrices 
\begin{align*}M_a&=\exp\left (s_a Q\right ),\ \  s_a=\int_0^{\ell(e_a)} \mu_{e_a}(t)dt,\\
M_b&=\exp\left (s_b Q\right ),\ \  s_b=\int_0^{\ell(e_b)} \mu_{e_b}(t)dt,
\end{align*}
where $\ell(e)$ is the length of edge $e$ and $\mu_{e_a}(t)$ and $\mu_{e_b}(t)$
are time dependent mutations rates.

Supposing, without loss of generality, that $s_a\le s_b$, define the Markov matrix $$M=M_bM_a^{-1}=\exp\left ((s_b-s_a) Q\right ).$$
Then the site pattern distribution can be viewed as one obtained from a tensor with symmetry in $a$ and $b$ that has been acted on by $M$ in the $b$-index. 
More specifically, we imagine that on the edges leading toward both $a$ and $b$, the Markov matrix $M_a$ describes an initial substitution process, but on the edge to $b$ there is a subsequent mutation process described by $M$. If we introduce an additional action by $M$ on the edge to $a$, then in the resulting distribution we would regain symmetry in $a$ and $b$. Since no coalescent events occur in these pendant edges, there are no complications arising from the coalescent events that do occur.

To formalize this mathematically, suppose
$P$ is an $N$-way $\kappa\times\kappa\times\cdots \times \kappa$ tensor. 
Define the action of a $\kappa\times\kappa$ matrix $M$ in the $k$th index of $P$ by $Q=P*_k M$ where
$$Q(i_1,i_2,\ldots,i_k,\ldots, i_N)=vM,$$
with $v$ the row vector determined by fixing the $\ell$th index of $P$ to be $i_\ell$ for all $\ell\ne k$. 
For example, for $n = 3$ and $k=1$, the tensor $P*_1M$ is specified by 
$(P*_1M)_{ijk} = (P_{\cdot j k} M)_i$.
Given an $n$-tuple of matrices $(M_1,M_2,\dots M_n)$, let
$$P*(M_1,M_2,\dots, M_n)=(\dots ((P*_1M_1)*_2M_2)\dots *_nM_n)$$ denote the action in each of the indices of $P$.

\begin{defn}
\label{defn: ee}
Let $\psi^+$ be a rooted topological species tree on $X$ with $|X|=n$. Then the \emph{extended exchangeable model}, $\text{\EE}_\kappa(\psi^+)$, is the set of all $n$-way site pattern probability tensors $P$, 
such that there is an $n$-tuple $M=(M_1,M_2,\dots, M_n)$ of $\kappa\times \kappa$ non-singular Markov matrices $M_i$ and a non-negative array $\tilde P$ in the model $\text{\UE}_\kappa(\psi^+)$ such that $P*M=\tilde P$. 
\end{defn}

We note that  \UE\  is a submodel of \EE: any distribution in $\text{\UE}_\kappa(\psi^+)$ is seen to lie in $\text{\EE}_\kappa(\psi^+)$  by taking all matrices $M_i$ to be the identity. Also, to ensure that the \EE\ model does not include all distributions, it is important that the $M_i$ be non-singular in this definition:  Otherwise, if the $M_i$ describe processes where \emph{all} states transition to the same state with probability 1, then for any tensor
$P$, $P*(M_1,M_2,\dots M_n)=\tilde P$, a tensor with a single diagonal entry equal to 1 that is in UE.

\medskip

While the \UE\ model on a 2-leaf tree imposes constraints on the probability distribution of site patterns, the 2-leaf \EE\ model is dense among all probability distributions. Indeed, the \EE\ model on such a tree simply requires that the site pattern distribution have the form of $P= M_1^{-T}SM_2^{-1}$ with $S$  a  symmetric probability matrix and the $M_i$ Markov. But a dense subset of all probability distributions can be expressed as $P=DM$ for a  diagonal matrix $D$ with entries from the row sums of $P$ and an invertible Markov matrix $M$. We can thus take $M_1=M$, $S=M^TDM$, and $M_2=I$.

For a 3-taxon tree, though, the \EE\ model is typically not the full probability simplex $\Delta^{\kappa^3-1}$. 
For $\kappa \geq 4$, this follows from a simple dimension bound. The $UE(\psi^+)$ model for a 3-taxon tree $\psi^+$ has, from Corollary 4.3, dimension
$$d_\kappa = \frac {\kappa^3+\kappa}2 -1.$$
Moreover, the affine closure of the UE model on a 3-taxon tree is mapped to itself by the * action of $(M^{-1},M^{-1},M^{-1})$ for any Markov matrix $M$. Thus the dimension of the $EE(\psi^+)$ model can be at most $$\dim(UE(\psi^+)+2\kappa(\kappa-1),$$ where the second term is the number of parameters specifying two Markov matrices. Thus
$$\dim(EE(\psi^+)\le\frac{\kappa^3+4\kappa^2-3\kappa}2 -1< \kappa^3-1$$ for all $\kappa\ge 4$.

As we address in the remark following Corollary \ref{cor:EEinvariants}, 
we can confirm computationally that for a 3-taxon tree and $\kappa = 3$, $EE(\psi^+)$ is of lower dimension
than the probability simplex $\Delta^{26}$ and that for $\kappa = 2$,
the Zariski closure of $EE(\psi^+)$ is equal to $\Delta^{7}.$

\begin{rmk} A more restrictive variant of the \EE\ model, that is closer to the mechanistic models of \cite{LK18} model, could be defined by requiring that all the `extension' matrices $M_i$ arise as exponentials of the same GTR rate matrix.  While this common exponential condition is not expressible purely through algebra, there are other algebraic relaxations of it that one could impose instead, such as that the extension matrices $M_i$ are symmetric and commute.
\end{rmk}

\section{The \EE\ model on 3-taxon trees}\label{sec:EE3taxa}

By Definition \ref{defn: ee}, the \EE\ model on a 3-leaf rooted tree $\psi^+$ is the set of  $\kappa\times \kappa \times \kappa$ probability tensors of the form
$$P=\tilde P *(M_a^{-1},M_b^{-1},M_c^{-1}),$$ 
for some $\tilde P \in$ \UE$(\psi^+)$ and invertible Markov matrices $M_a,M_b,$ and $M_c$.

Because of the matrix actions, this model has a non-linear structure. This makes it more difficult to fully characterize the model \EE\ in terms of constraints than it was for the affine linear \UE\ model. It also means that the optimization problem for maximum likelihood may not be a convex one, making direct use of constraints for inference more appealing than attacking the optimization problem inherent to maximum likelihood. 

While determining all equality constraints satisfied by the model (i.e., generators of the ideal of model invariants) is difficult computationally, here we focus on determining some of them. We will use these in  Section \ref{sec:EEid} in our proof of tree identifiability under the \EE\ model.
Noting that only a few constraints are utilized in the SVDquartets method, future work should investigate whether the constraints found here are useful for rooted tree inference.

\begin{prop} \label{prop:CEEsym}Let $P$ be a tensor in the EE model on $\psi^+=((a,b),c)$, and $\Cof(A)$ denote the matrix of cofactors of a square matrix $A$.
Then for all $k \in [\kappa]$ the matrices 
$$Q^a_{\cdot\cdot k}=P_{+\cdot\cdot}\Cof(P_{\cdot+\cdot})^T P_{\cdot\cdot k}$$
and 
$$Q^b_{\cdot\cdot k}=P_{\cdot\cdot k}\Cof(P_{+\cdot\cdot})P_{\cdot+\cdot}^T$$
are symmetric: that is,
\begin{equation}
Q^a_{\cdot\cdot k}= ( Q^a_{\cdot\cdot k}) ^T \label{eq:EEleft}
\end{equation}
and
\begin{equation}
Q^b_{\cdot\cdot k}=( Q^b_{\cdot\cdot k})^T.\label{eq:EEright}
\end{equation}
\end{prop}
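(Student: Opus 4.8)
The plan is to pull everything back to the \UE\ tensor $\tilde P$ via the defining relation $P=\tilde P*(M_a^{-1},M_b^{-1},M_c^{-1})$ and then reduce the claimed symmetry to the slice symmetry of $\tilde P$. First I would record how the objects in the statement transform under the $*$-action. Writing the action entrywise, $P_{ijk}=\sum_{i'j'k'}\tilde P_{i'j'k'}(M_a^{-1})_{i'i}(M_b^{-1})_{j'j}(M_c^{-1})_{k'k}$. Since each $M_x$ is a Markov matrix we have $M_x\mathbf 1=\mathbf 1$ and hence $M_x^{-1}\mathbf 1=\mathbf 1$, so summing over an index acted on by $M_x^{-1}$ simply erases that factor. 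This yields the clean matrix identities
\[
P_{+\cdot\cdot}=M_b^{-T}\tilde P_{+\cdot\cdot}M_c^{-1},\quad
P_{\cdot+\cdot}=M_a^{-T}\tilde P_{\cdot+\cdot}M_c^{-1},\quad
P_{\cdot\cdot k}=M_a^{-T}\hat P_k M_b^{-1},
\]
where $\hat P_k=\sum_{k'}(M_c^{-1})_{k'k}\,\tilde P_{\cdot\cdot k'}$ is a real linear combination of the slices of $\tilde P$.

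Next I would extract the two structural facts about $\tilde P\in\UE_\kappa(((a,b),c))$ that drive the argument. Because $\{a,b\}$ is a cherry, every slice $\tilde P_{\cdot\cdot k}$ is symmetric, so each $\hat P_k$ is symmetric as well. More importantly, exchangeability of the labels $a$ and $b$ forces the two off-cherry marginals to coincide: using the slice symmetry $\tilde P_{ijk}=\tilde P_{jik}$ one has the entrywise identity $(\tilde P_{+\cdot\cdot})_{jk}=\sum_i \tilde P_{ijk}=\sum_i \tilde P_{jik}=(\tilde P_{\cdot+\cdot})_{jk}$, so $\tilde P_{+\cdot\cdot}=\tilde P_{\cdot+\cdot}=:G$. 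This matrix $G$ is symmetric by the $\{a,c\}$ (equivalently $\{b,c\}$) symmetry condition of the \UE\ model.

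Finally I would substitute and simplify, trading the cofactor for the adjugate through $\Cof(M)^T=\operatorname{adj}(M)$, and using the reversal rule $\operatorname{adj}(XY)=\operatorname{adj}(Y)\operatorname{adj}(X)$ together with $\operatorname{adj}(N)=\det(N)N^{-1}$ for the invertible factors $M_a^{-T}$, $M_b^{-1}$, $M_c^{-1}$. For $Q^a_{\cdot\cdot k}=P_{+\cdot\cdot}\Cof(P_{\cdot+\cdot})^T P_{\cdot\cdot k}$, the $M_c$ factors cancel in the middle and the $M_a$ factors cancel between $\operatorname{adj}(P_{\cdot+\cdot})$ and $P_{\cdot\cdot k}$, leaving the central block $G\,\operatorname{adj}(G)=\det(G)I$, so that
\[
Q^a_{\cdot\cdot k}=\det(M_aM_c)^{-1}\det(G)\,M_b^{-T}\hat P_k M_b^{-1},
\]
which is symmetric because $\hat P_k$ is and conjugation $X^{-T}(\cdot)X^{-1}$ preserves symmetry. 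The computation for $Q^b_{\cdot\cdot k}$ is identical with the roles of $a$ and $b$ interchanged, giving $Q^b_{\cdot\cdot k}=\det(M_bM_c)^{-1}\det(G)\,M_a^{-T}\hat P_k M_a^{-1}$, again symmetric.

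The step I expect to be the crux is the middle cancellation. A priori the product of the three symmetric matrices $\tilde P_{+\cdot\cdot}$, $\operatorname{adj}(\tilde P_{\cdot+\cdot})$, $\tilde P_{\cdot\cdot k}$ need not be symmetric, and it is precisely the equality $\tilde P_{+\cdot\cdot}=\tilde P_{\cdot+\cdot}=G$ — the algebraic shadow of $a,b$-exchangeability — that lets the two outer factors collapse against the adjugate into the scalar $\det(G)$, after which only the symmetric $\hat P_k$ survives. Phrasing this collapse through the universally valid identity $G\,\operatorname{adj}(G)=\operatorname{adj}(G)\,G=\det(G)I$ (rather than through $G^{-1}$) has the added benefit of avoiding any invertibility or genericity hypothesis on $G$, so the conclusion holds for every $P$ in the \EE\ model.
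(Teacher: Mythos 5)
Your proof is correct, and while it rests on the same foundation as the paper's---pulling $P$ back through the Markov matrices to $\tilde P\in\UE$ and invoking the two structural facts (symmetry of the $c$-slices of $\tilde P$ and the equality $\tilde P_{+\cdot\cdot}=\tilde P_{\cdot+\cdot}=G$)---the execution differs in a way worth noting. The paper first observes that $P_{\cdot+\cdot}^{-T}P_{+\cdot\cdot}^{T}=M_aM_b^{-1}$, so that acting by this matrix on the $a$-index of $P$ recovers a tensor $\tilde P*(M_b^{-1},M_b^{-1},M_c^{-1})$ with symmetric $c$-slices; this shows $P_{+\cdot\cdot}P_{\cdot+\cdot}^{-1}P_{\cdot\cdot k}$ is symmetric, and the cofactor form is then obtained by clearing the determinant denominator. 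Because that route inverts $P_{\cdot+\cdot}$, the paper must patch the singular case with a density-plus-continuity argument (exhibiting one $\tilde P$ with invertible marginal and extending by limits), and it handles $Q^b$ by conjugation. Your route instead expands $Q^a_{\cdot\cdot k}$ directly using adjugate calculus---$\Cof(A)^T=\operatorname{adj}(A)$, the reversal rule $\operatorname{adj}(XY)=\operatorname{adj}(Y)\operatorname{adj}(X)$, and crucially the universal identity $G\operatorname{adj}(G)=\det(G)I$---collapsing everything to $\det(M_aM_c)^{-1}\det(G)\,M_b^{-T}\hat P_k M_b^{-1}$, with the analogous closed form for $Q^b_{\cdot\cdot k}$. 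Since the only inverses you use are those of $M_a,M_b,M_c$, which exist by the definition of the \EE\ model, your argument is a pointwise polynomial identity valid for \emph{every} $P$ in the model, with no genericity or limiting step; it also yields the explicit scalar-times-conjugate form of the $Q$ matrices, which makes their symmetry transparent. What the paper's version buys in exchange is the conceptual observation that the ``asymmetry'' matrix $M_aM_b^{-1}$ is recoverable from the marginals of $P$ and can be used to re-symmetrize the tensor---an idea that resonates with how these invariants are deployed later in the identifiability arguments. (One small remark: your appeal to the symmetry of $G$ is never actually used; only $G\operatorname{adj}(G)=\det(G)I$ and the symmetry of the $\hat P_k$ matter.)
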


\begin{proof}
If $P$ is in the \EE\ model,  then $P=\tilde P *(M_a^{-1},M_b^{-1},M_c^{-1})$, with $\tilde P\in$ \UE\ and $M_a,M_b,M_c$ Markov.
Then $$P_{\cdot+\cdot}=M_a^{-T}\tilde P_{\cdot+\cdot} M_c^{-1}
\text{ \  and \  }
P_{+\cdot\cdot}=M_b^{-T}\tilde P_{+\cdot\cdot} M_c^{-1}=M_b^{-T} \tilde P_{\cdot+\cdot}M_c^{-1}$$
since $\tilde P\in$ \UE\  implies $\tilde P_{\cdot+\cdot} = \tilde P_{+\cdot\cdot}.$
Then, assuming necessary inverses exist, $$ P_{\cdot+\cdot}^{-T}P_{+\cdot\cdot}^{T}=M_aM_b^{-1},$$

Thus
$$P*_a(P_{\cdot+\cdot}^{-T}P_{+\cdot\cdot}^{T})=\tilde P*(M_b^{-1},M_b^{-1},M_c^{-1}).$$ 
But it is straightforward to check that every slice with fixed $c$-index of $\tilde P*(M_b^{-1},M_b^{-1},M_c^{-1})$ is symmetric, since that is true for $\tilde P$.
Thus $$(P_{\cdot+\cdot}^{-T}P_{+\cdot\cdot}^{T})^T P_{\cdot\cdot k}=P_{+\cdot\cdot} P_{\cdot+\cdot}^{-1} P_{\cdot\cdot k}$$
is symmetric for every $k$.
Using the cofactor formula for the inverse of a matrix, and clearing denominators by multiplying by a determinant yields \eqref{eq:EEleft}.

The assumption of invertibility used in this argument can be justified for a dense set of choices of $\tilde P$. Indeed, it is enough to exhibit
one such choice, since that indicates the  subset leading to non-invertibility is a proper subvariety (defined by certain minors vanishing), and hence of lower dimension. Such a choice is obtained with the Markov matrices being the identity, and $\tilde P$ having non-zero diagonal entries, and zero elsewhere.
Since the claim is established on a dense set, it holds everywhere by continuity.

The claim \eqref{eq:EEright} can be shown either in a similar way, or by conjugating (in the sense of multiplying by a matrix and its transpose) $Q^a_{\cdot\cdot k}$ by $P_{\cdot + \cdot} P_{+\cdot\cdot}^{-1}$  and removing determinant factors.
\end{proof}

\begin{rmk} Since $Q^a_{\cdot\cdot k}$ and $Q^b_{\cdot\cdot k}$ are conjugate for any tensor $P$ (even one not in the \EE\ model), checking that one is symmetric implies the other is as well, provided the appropriate inverse exists. If these are used as necessary conditions for membership in the model, when applied to data it may still be desirable to check that both are approximately symmetric, since it is unclear how conjugation will effect the way we measure the inevitable stochastic error leading to violation of perfect symmetry.
\end{rmk}

\begin{cor} 
\label{cor:EEinvariants}
The \EE\ model on $\psi^+=((a,b),c)$ is contained in the algebraic variety defined by the degree $\kappa+1$ polynomials given by the entries of the $2\kappa$ matrix equations
\begin{align*}
P_{+\cdot\cdot}\Cof(P_{\cdot+\cdot})^T P_{\cdot\cdot k} &- P_{\cdot\cdot k}^T \Cof(P_{\cdot+\cdot}) P_{+\cdot\cdot}^T,\\
P_{\cdot\cdot k}\Cof(P_{+\cdot\cdot}) P_{\cdot+\cdot}^T &-P_{\cdot+\cdot} \Cof(P_{+\cdot\cdot})^T P_{\cdot\cdot k}^T.
\end{align*}
\end{cor}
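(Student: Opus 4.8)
The plan is to read off the stated polynomial equations directly from Proposition \ref{prop:CEEsym}, since that result has already done all the substantive work. The key observation is that a square matrix $Q$ is symmetric precisely when $Q-Q^T$ is the zero matrix, so the symmetry conclusions \eqref{eq:EEleft} and \eqref{eq:EEright} of the proposition are equivalent to the vanishing of the matrices $Q^a_{\cdot\cdot k}-(Q^a_{\cdot\cdot k})^T$ and $Q^b_{\cdot\cdot k}-(Q^b_{\cdot\cdot k})^T$ for every $k\in[\kappa]$.

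First I would expand these two differences explicitly. Using $(ABC)^T=C^TB^TA^T$ together with $(\Cof(M)^T)^T=\Cof(M)$, one computes
$$(Q^a_{\cdot\cdot k})^T=P_{\cdot\cdot k}^T\Cof(P_{\cdot+\cdot})P_{+\cdot\cdot}^T,\qquad (Q^b_{\cdot\cdot k})^T=P_{\cdot+\cdot}\Cof(P_{+\cdot\cdot})^T P_{\cdot\cdot k}^T,$$
so that $Q^a_{\cdot\cdot k}-(Q^a_{\cdot\cdot k})^T$ and $Q^b_{\cdot\cdot k}-(Q^b_{\cdot\cdot k})^T$ are exactly the two matrix expressions appearing in the statement. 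By Proposition \ref{prop:CEEsym} both vanish on the \EE\ model, so every $P\in\text{\EE}_\kappa(\psi^+)$ is a zero of each of the $2\kappa^3$ scalar polynomials obtained from the entries of these $2\kappa$ matrices. This exhibits the desired inclusion of the \EE\ model in the claimed variety.

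It then remains to verify the degree claim, which I would do entrywise. The marginalizations $P_{+\cdot\cdot}$ and $P_{\cdot+\cdot}$ and the slice $P_{\cdot\cdot k}$ are all linear in the entries of $P$, while each entry of a cofactor matrix $\Cof(M)$ is a signed $(\kappa-1)\times(\kappa-1)$ minor and hence homogeneous of degree $\kappa-1$ in the entries of $M$. Composing, each entry of the triple products $P_{+\cdot\cdot}\Cof(P_{\cdot+\cdot})^T P_{\cdot\cdot k}$ and $P_{\cdot\cdot k}\Cof(P_{+\cdot\cdot})P_{\cdot+\cdot}^T$ has degree $1+(\kappa-1)+1=\kappa+1$ in the entries of $P$, and subtracting the transpose does not raise the degree.

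I do not anticipate a genuine obstacle, as the whole content of the corollary is repackaging \eqref{eq:EEleft} and \eqref{eq:EEright} as the vanishing of explicit polynomials; Proposition \ref{prop:CEEsym} already settled the delicate points, namely the invertibility assumptions and the passage from a dense set to all of the model by continuity. The only place to be careful is the bookkeeping of transposes, to confirm that the differences land on exactly the two expressions written in the statement rather than on some rearrangement of them, and it is precisely the identity $(\Cof(M)^T)^T=\Cof(M)$ that makes this match work out.
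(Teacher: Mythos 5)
Your proposal is correct and matches the paper's intent exactly: the paper gives no separate proof of Corollary \ref{cor:EEinvariants} precisely because it is the immediate repackaging of Proposition \ref{prop:CEEsym} that you describe, rewriting the symmetry of $Q^a_{\cdot\cdot k}$ and $Q^b_{\cdot\cdot k}$ as the vanishing of $Q-Q^T$, with the transpose bookkeeping and the degree count $1+(\kappa-1)+1=\kappa+1$ both checking out.
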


The polynomials of this corollary also arise as phylogenetic invariants for the general Markov (GM) model of sequence evolution \cite{AR03} with no
coalescent process. 
In the setting of that work, the tensors of interest are those in the orbits of 3-way diagonal tensors under actions of $GL_\kappa$ in each index, while here they are the orbits of tensors symmetric in two indices under the same $GL_\kappa$ actions. Since diagonal tensors display this symmetry, the invariants above must also apply to the GM model.
However, the GM model on a 3-taxon tree has additional invariants of this form, for every pair of taxa, not just those in the cherry.

\begin{rmk}
Using the computational algebra software Singular \cite{DGPS}, 
we are able to show that for $\kappa = 2$,
there are no non-trivial polynomials vanishing on the \EE\ model. Thus,
the polynomial invariants implied by Corollary \ref{cor:EEinvariants} are identically zero. 
For $\kappa = 3$, we verified computationally that these invariants are not identically zero.

\end{rmk}

As demonstrated by methods such as SVDquartets, reframing model constraints in terms of rank conditions can
be useful for developing practical methods of phylogenetic inference. 
With this in mind, we can reinterpret the results of 
Corollary \ref{cor:EEinvariants} as rank conditions for the \EE\ model. 
To do so, we use the following lemma, which follows a construction of G. Ottaviani that was suggested to us by L. Oeding.

\begin{lemma}\label{lem:Ott} Let $A,B,C,D,E, F$ be six $\kappa\times\kappa$ matrices, with $B,E$ invertible, satisfying $$CB^{-1}A+DE^{-1}F=0.$$
Then the $3\kappa\times 3\kappa$ matrix
$$\begin{pmatrix}0&A&B\\D&0&C\\E&F&0\end{pmatrix}$$
has rank $2\kappa$.
\end{lemma}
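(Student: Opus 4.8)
The plan is to compute the right kernel of the block matrix directly and then invoke rank--nullity; the hypothesis $CB^{-1}A+DE^{-1}F=0$ will turn out to be exactly the condition that makes this computation give a kernel of dimension $\kappa$. Denote the block matrix by $\mathcal{M}$ and write a candidate kernel vector in block form as $(x,y,z)^T$ with $x,y,z\in\RR^\kappa$. The equation $\mathcal{M}(x,y,z)^T=0$ unpacks into the three block equations
$$Ay+Bz=0,\qquad Dx+Cz=0,\qquad Ex+Fy=0.$$

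First I would use that $B$ and $E$ are invertible: the first and third equations then determine $z=-B^{-1}Ay$ and $x=-E^{-1}Fy$ uniquely in terms of $y$. Substituting these into the remaining (middle) equation $Dx+Cz=0$ yields $-(DE^{-1}F+CB^{-1}A)y=0$, which by the hypothesis $CB^{-1}A+DE^{-1}F=0$ holds for \emph{every} $y$. Hence the kernel is exactly
$$\left\{\begin{pmatrix}-E^{-1}Fy\\ y\\ -B^{-1}Ay\end{pmatrix}\;:\; y\in\RR^\kappa\right\},$$
and since the middle block of such a vector is $y$ itself, the parametrization $y\mapsto(x,y,z)$ is injective. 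Thus the nullity of $\mathcal{M}$ is exactly $\kappa$, and rank--nullity gives $\operatorname{rank}(\mathcal{M})=3\kappa-\kappa=2\kappa$, as claimed.

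There is no serious obstacle here: the entire content of the lemma is that the hypothesis forces the middle equation to become vacuous once the outer two are solved, so that the kernel neither exceeds nor falls short of the expected dimension $\kappa$. The only point requiring care is the two-sided verification of the kernel dimension --- that every vector of the displayed form lies in the kernel (by reversing the substitution) and that no kernel vector escapes this form (forced by the invertibility of $B$ and $E$) --- so that one concludes the nullity is \emph{exactly} $\kappa$ rather than merely at least $\kappa$, which is what pins the rank to the exact value $2\kappa$.
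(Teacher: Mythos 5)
Your proof is correct, and it takes a genuinely different route from the paper's. The paper proves the lemma with a one-line factorization (a construction of Ottaviani):
\begin{equation*}
\begin{pmatrix}0&A&B\\D&0&C\\E&F&0\end{pmatrix}=
\begin{pmatrix}I&0&0\\0&I&D\\0&0&E\end{pmatrix}
\begin{pmatrix}0&0&I\\0&-(CB^{-1}A+DE^{-1}F)&CB^{-1}\\I&E^{-1}F&0\end{pmatrix}
\begin{pmatrix}I&0&0\\0&I&0\\0&A&B\end{pmatrix},
\end{equation*}
in which the outer factors are block-triangular with diagonal blocks $I,I,E$ and $I,I,B$, hence invertible, and the middle factor, once the hypothesis zeroes out its central block, visibly has rank $2\kappa$; since multiplication by invertible matrices preserves rank, the lemma follows. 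You instead compute the kernel: invertibility of $B$ and $E$ forces $z=-B^{-1}Ay$ and $x=-E^{-1}Fy$ from the first and third block equations, the hypothesis makes the middle equation hold identically, and rank--nullity gives rank exactly $2\kappa$. Your argument is complete: you verify both inclusions for the kernel and the injectivity of $y\mapsto(x,y,z)$, so the nullity is exactly $\kappa$ rather than merely at least $\kappa$. As for what each approach buys: your kernel computation is systematic and requires no inspiration---one just solves the homogeneous system---whereas the paper's factorization must be divined but is then verified by a single block multiplication. Both arguments in fact yield the sharper statement that, without the hypothesis, the rank equals $2\kappa+\operatorname{rank}\bigl(CB^{-1}A+DE^{-1}F\bigr)$: in your setup because the kernel becomes the set of vectors parametrized by $\{y : (CB^{-1}A+DE^{-1}F)y=0\}$, in the paper's because that is precisely the rank of the middle factor.
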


\begin{proof}
Observe
$$\begin{pmatrix}0&A&B\\D&0&C\\E&F&0\end{pmatrix}=\begin{pmatrix}I&0&0\\0&I&D\\0&0&E\end{pmatrix}\begin{pmatrix}0&0&I\\0&-(CB^{-1}A+DE^{-1}F)&CB^{-1} \\I&E^{-1}F&0\end{pmatrix}\begin{pmatrix}I&0&0\\0&I&0\\0&A&B\end{pmatrix}.$$
\end{proof}

\begin{cor}\label{cor:higherdegree} Tensors in the \EE\ model on $\psi^+=((a,b),c)$ are contained in the algebraic variety defined by the degree $2\kappa+1$ polynomials given by the  $(2\kappa+1)\times (2\kappa+1)$ minors of each of the $2\kappa$ matrices
$$\begin{pmatrix}0&P_{\cdot\cdot k}&P_{\cdot+\cdot}\\-P_{\cdot\cdot k}^T&0&P_{+\cdot\cdot}\\-P_{\cdot+\cdot}^T&-P_{+\cdot\cdot}^T&0\end{pmatrix}$$
and
$$\begin{pmatrix}0&P_{\cdot + \cdot }^T&P_{+\cdot\cdot}^T\\-P_{\cdot+ \cdot }&0&P_{\cdot\cdot k}\\-P_{+\cdot \cdot}&-P_{\cdot\cdot k}^T&0\end{pmatrix}.$$
\end{cor}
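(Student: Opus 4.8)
The plan is to recognize each of the two $3\kappa\times3\kappa$ block matrices as an instance of the Ottaviani construction of Lemma \ref{lem:Ott}, whose rank-$2\kappa$ conclusion is exactly the assertion that all $(2\kappa+1)\times(2\kappa+1)$ minors vanish. Since marginalizations and slices are linear in the entries of $P$, every entry of these block matrices is a linear form, so each $(2\kappa+1)\times(2\kappa+1)$ minor is a degree $2\kappa+1$ polynomial in $P$, matching the statement; it therefore suffices to establish the rank bound.

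First I would work on the dense subset of the \EE\ model where $P_{\cdot+\cdot}$ and $P_{+\cdot\cdot}$ are invertible, exactly as in the proof of Proposition \ref{prop:CEEsym}. There, using the adjugate identity $\Cof(A)^T=(\det A)A^{-1}$ and clearing the determinant factor, the symmetry of $Q^a_{\cdot\cdot k}$ becomes the relation
$$P_{+\cdot\cdot} P_{\cdot+\cdot}^{-1} P_{\cdot\cdot k} - P_{\cdot\cdot k}^T (P_{\cdot+\cdot}^T)^{-1} P_{+\cdot\cdot}^T = 0,$$
while the symmetry of $Q^b_{\cdot\cdot k}$ becomes the analogous relation
$$P_{\cdot\cdot k}(P_{+\cdot\cdot}^T)^{-1}P_{\cdot+\cdot}^T - P_{\cdot+\cdot}(P_{+\cdot\cdot})^{-1}P_{\cdot\cdot k}^T = 0.$$

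Next I would match the first block matrix to the form of Lemma \ref{lem:Ott} by reading off $A=P_{\cdot\cdot k}$, $B=P_{\cdot+\cdot}$, $C=P_{+\cdot\cdot}$, $D=-P_{\cdot\cdot k}^T$, $E=-P_{\cdot+\cdot}^T$, and $F=-P_{+\cdot\cdot}^T$. A short sign computation shows that $CB^{-1}A+DE^{-1}F$ equals the left-hand side of the first displayed relation, hence vanishes, while $B$ and $E$ are invertible on the dense set; Lemma \ref{lem:Ott} then yields rank exactly $2\kappa$. The second block matrix is handled identically with $A=P_{\cdot+\cdot}^T$, $B=P_{+\cdot\cdot}^T$, $C=P_{\cdot\cdot k}$, $D=-P_{\cdot+\cdot}$, $E=-P_{+\cdot\cdot}$, $F=-P_{\cdot\cdot k}^T$, using the second relation.

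Finally, vanishing of the $(2\kappa+1)\times(2\kappa+1)$ minors is a closed condition, so having established it on a dense subset of the \EE\ model, it extends to the entire model by continuity, exactly as the invertibility hypothesis was discharged in Proposition \ref{prop:CEEsym}. The main obstacle here is bookkeeping rather than conceptual: one must track the signs in the off-diagonal blocks carefully so that the Ottaviani hypothesis $CB^{-1}A+DE^{-1}F=0$ reproduces the symmetry relations \emph{with the correct matrices occupying the invertible slots $B$ and $E$}. A careless sign flip would place a transposed or non-invertible block in those positions and invalidate the application of the lemma, even though the underlying vanishing identity would still be formally present.
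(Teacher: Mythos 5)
Your proposal is correct and follows essentially the same route as the paper: the paper's proof is precisely the observation that choosing $A,B,C,D,E,F$ as the blocks of the two displayed matrices makes the hypothesis $CB^{-1}A+DE^{-1}F=0$ of Lemma \ref{lem:Ott} express the symmetry of $Q^a_{\cdot\cdot k}$ and $Q^b_{\cdot\cdot k}$ established in Proposition \ref{prop:CEEsym}. Your sign bookkeeping and block identifications are exactly right, and your explicit handling of the invertibility hypothesis via density and the closedness of minor-vanishing only spells out what the paper leaves implicit.
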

\begin{proof} Choosing $A,B,C,D,E,F$ in Lemma \ref{lem:Ott} as shown in these matrices makes the equation $CB^{-1}A+DE^{-1}F=0$ express that $Q^a_{\cdot\cdot k}$ and  $Q^b_{\cdot\cdot k}$ are symmetric, which was shown in Proposition \ref{prop:CEEsym}.
\end{proof}

The result of Corollary \ref{cor:higherdegree} allows one to formulate necessary conditions for EE model membership on the 3-taxon tree in terms of rank conditions on matrices, much as the SVDquartets method is based on rank conditions on matrices
in the 4-taxon case.


\section{Tree identifiability under the \EE\ model}\label{sec:EEid}
The \EE\ model invariants of the previous section enable us to prove that the rooted tree topology is generically identifiable under the \EE\ model. 
We establish these results for $\kappa \geq 4$, which includes the cases most relevant
for phylogenetic analysis.

To establish identifiability, we use the following \emph{non}-identifiability result.

\begin{lemma} \label{lem:2nonid} 
Consider a 2-taxon species tree $(a\tc x,b\tc (\ell-x) )$, with $0\le x\le \ell$ with constant population size $N$ above the root and any GTR rate matrix $Q$ with stationary distribution $\pi$. Then the probability distribution matrix $F$ of site patterns under the \CK\ model is symmetric and independent of $x$.
\end{lemma}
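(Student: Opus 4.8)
The plan is to compute the site pattern distribution matrix $F$ directly from the \CK\ model on this 2-taxon tree and show that the $x$-dependence cancels. On a 2-taxon tree $(a\tc x, b\tc(\ell-x))$ with a single population above the root, the multispecies coalescent governs when the two sampled lineages coalesce. Since there is only one pair of lineages, coalescence can only occur in the ancestral population above the root, at a random time $t$ past the root measured upward. Crucially, the \emph{gene tree} is always the unique 2-leaf topology; only the coalescent time $t$ is random, and its distribution depends on $N$ but \emph{not} on $x$, since the two lineages cannot coalesce until they are both in the ancestral population. This observation is the conceptual heart of the argument.

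First I would set up the substitution process along the gene tree. Conditioned on a coalescent time $t$, the two lineages meet at a common ancestor located a total of $x + t$ substitution-time units above leaf $a$ and $(\ell - x) + t$ units above leaf $b$ (after absorbing generation counts, mutation rate $\mu$, and $N$ into the appropriate time scaling). Writing $M_a = \exp(\mu Q (x+t))$ and $M_b = \exp(\mu Q(\ell - x + t))$ for the transition matrices on the two pendant paths, and letting $\Pi = \mathrm{diag}(\pi)$ denote the stationary root distribution, the conditional site pattern matrix is $M_a^T \Pi M_b$. I would then integrate over the coalescent time $t$ against its density to obtain $F = \int M_a^T \Pi M_b \, dt$.

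Next I would exploit GTR time-reversibility, which gives the detailed-balance relation $\Pi \exp(\mu Q s) = \exp(\mu Q s)^T \Pi$ for every $s$, equivalently $\Pi M = M^T \Pi$. Using this, the conditional matrix becomes $M_a^T \Pi M_b = \Pi \, \exp(\mu Q(x+t)) \exp(\mu Q(\ell - x + t)) = \Pi \exp(\mu Q(\ell + 2t))$. The key point is that the exponents \emph{add}, so $x$ disappears entirely and the integrand depends only on $\ell$, $t$, $N$, $\mu$, and $Q$. Integrating over $t$ therefore yields $F = \Pi \int \exp(\mu Q(\ell + 2t))\, dt$, manifestly independent of $x$. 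Symmetry of $F$ then follows from reversibility as well: $\Pi \exp(\mu Q s)$ is a symmetric matrix for each $s$ by detailed balance, and a convex integral of symmetric matrices is symmetric. This also recovers the special \CK\ exchangeability when $x = \ell/2$, consistent with equation \eqref{eq:exch}.

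The main obstacle I anticipate is purely bookkeeping rather than conceptual: correctly accounting for the time-scaling conventions so that the coalescent density and the substitution exponents are measured in compatible units, and verifying that the coalescent waiting-time distribution above the root genuinely does not see $x$ (one must confirm both lineages have entered the ancestral population before the root, which holds because coalescence is only permitted there). Once the reversibility identity $\Pi M_a^T = M_a \Pi$ is applied to fold the two pendant transition matrices into a single exponential whose argument is $x$-free, both the symmetry and the $x$-independence of $F$ are immediate, so the substantive work is confined to setting up the integral correctly and invoking detailed balance.
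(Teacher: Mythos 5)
Your proposal is correct and takes essentially the same approach as the paper: both express $F$ as an integral over the coalescent time $t$ (whose density does not depend on $x$) of the pairwise GTR distribution, then use time reversibility to collapse the integrand into $\operatorname{diag}(\pi)\exp\bigl(\mu Q(\ell+2t)\bigr)$, from which symmetry (via detailed balance) and $x$-independence (via additivity of the commuting exponentials) are immediate. The only cosmetic difference is that you derive the reversibility step $M_a^T \Pi M_b = \Pi M_{x+t}M_{\ell-x+t}$ explicitly, whereas the paper simply asserts the resulting integral form.
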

\begin{proof} Using time reversibility, the distribution can be expressed as
$$F=\int _{t=0}^\infty \operatorname{diag}(\pi) M_x M_{2t} M_{\ell-x} \mu_N(t) dt$$
where $\mu_N(t)$ is the density function for coalescent times,  and $M_z=\exp(Qz).$ Since the integrand, a GTR distribution, is a symmetric matrix, then so is $F$.
Since the $M_z$ commute, and $M_xM_{\ell-x}=M_\ell$,
$$F= \operatorname{diag}(\pi) M_\ell \int _{t=0}^\infty  M_{2t} \mu_N(t) dt$$
has no dependence on $x$.
\end{proof}

\begin{thm} \label{thm:EEid} The rooted topological tree $\psi^+$ is identifiable from generic probability distributions in the \EE$_\kappa(\psi^+)$ model for $\kappa\ge 4$.
\end{thm}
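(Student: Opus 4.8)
The plan is to establish identifiability of the rooted topology $\psi^+$ under the $\EE_\kappa$ model for $\kappa \ge 4$ by leveraging the structure built up in the preceding sections, particularly the invariants of Section \ref{sec:EE3taxa} and the non-identifiability Lemma \ref{lem:2nonid}. Since a rooted binary tree is determined by its set of displayed rooted triples, it suffices to show that from a generic distribution $P \in \EE_\kappa(\psi^+)$ we can correctly recover, for each triple of taxa $\{a,b,c\}$, which rooted 3-taxon topology is induced on $\psi^+|_{\{a,b,c\}}$. Because both the $\UE$ and $\EE$ models behave well under marginalization (marginalizing a tensor in $\EE_\kappa(\psi^+)$ to three taxa yields a tensor in $\EE_\kappa(\psi^+|_{\{a,b,c\}})$, as the matrix actions commute with marginalization over the remaining indices), the whole problem reduces to the 3-taxon case. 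So the heart of the argument is: \emph{for a generic distribution in the $\EE$ model on a 3-taxon rooted tree $((a,b),c)$, the cherry $\{a,b\}$ is identifiable.}

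For the 3-taxon case, I would use the invariants of Corollary \ref{cor:EEinvariants} (equivalently the symmetry of the matrices $Q^a_{\cdot\cdot k}$ and $Q^b_{\cdot\cdot k}$ of Proposition \ref{prop:CEEsym}) as a \emph{detector} for the cherry. The key point is that these invariants vanish identically on $\EE_\kappa(((a,b),c))$ but should \emph{not} vanish on $\EE_\kappa$ for a tree with a different cherry. Thus, given a 3-taxon distribution from the true model, one computes the analogous symmetrized-cofactor matrices for all three candidate cherries $\{a,b\}$, $\{a,c\}$, $\{b,c\}$; the true cherry is singled out as the one for which these invariants vanish. To make this work generically I must show that the invariants corresponding to a \emph{wrong} cherry do not vanish on a generic point of the true model. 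Since each such "wrong-cherry" invariant is a polynomial, its vanishing locus is a subvariety; provided I exhibit a \emph{single} distribution in $\EE_\kappa(((a,b),c))$ where the wrong-cherry invariants are nonzero, that locus is a proper subvariety and hence measure zero, giving generic identifiability. This exhibition is where Lemma \ref{lem:2nonid} and the JC/GTR construction come in: I would build an explicit CK (or JC) distribution on a genuinely binary tree $((a,b),c)$ with distinct edge parameters and verify, by the Singular computation already referenced after Corollary \ref{cor:EEinvariants} (which confirms the invariants are not identically zero for $\kappa = 3$, and analogously for $\kappa \ge 4$), that the wrong-cherry invariants are nonzero there.

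The role of the $\kappa \ge 4$ hypothesis and of Lemma \ref{lem:2nonid} deserves care, and I expect the interaction between them to be the main obstacle. Lemma \ref{lem:2nonid} warns that a symmetric, $x$-independent two-taxon distribution can fail to pin down edge placement; the concern is that some degenerate symmetry could make a wrong-cherry invariant \emph{accidentally} vanish on a positive-dimensional subset, not just a measure-zero one. The dimension bound preceding the remarks in Section \ref{sec:EE} shows $\dim(\EE(\psi^+)) < \kappa^3 - 1$ precisely when $\kappa \ge 4$, so the model is a proper subvariety of the simplex and there is genuine "room" for the invariants to separate the three cherry hypotheses; for $\kappa = 2$ the invariants vanish identically (by the Singular computation) and the method degenerates, which is exactly why the theorem is restricted to $\kappa \ge 4$. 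The crux is therefore to verify that the wrong-cherry invariant polynomials are not identically zero on the true-cherry model variety—i.e., that the true model is not contained in the wrong-cherry invariant variety—and I would discharge this by the explicit-point construction above, invoking the already-cited computational verification for the base case and a marginalization/limit argument (as in the proof of Theorem \ref{thm:UEidanalytic}) to transfer it to the relevant submodels.
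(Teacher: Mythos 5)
Your high-level architecture matches the paper's exactly: reduce to rooted triples by marginalization, use the invariants of Corollary \ref{cor:EEinvariants} as a cherry detector, and reduce generic identifiability to exhibiting a \emph{single} point of the true-cherry model on which a wrong-cherry invariant is nonzero. The gap is that you never discharge this last step, and both tools you propose for it fail. First, the choice of substitution model is not a free detail: the paper's Appendix (Proposition \ref{prop: psuedo-exchangeability} and its corollary) proves that for the \CK\ Jukes--Cantor model the invariants associated to \emph{all three} rooted triples vanish on every distribution in the model, so your suggested ``CK (or JC) distribution on a genuinely binary tree with distinct edge parameters'' is precisely the trap the paper warns about: every JC point is non-generic in exactly the way that matters, no matter how generic its edge lengths. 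The paper instead must take a Kimura 2-parameter rate matrix (and this, not your dimension-count reading, is why the theorem is stated for $\kappa\ge4$: the $4\times4$ K2P matrix is embedded in a $\kappa$-state rate matrix for $\kappa>4$, while $\kappa=3$ is excluded even though the paper confirms computationally that the $\kappa=3$ model is a proper subvariety). Second, the Singular computation you invoke shows only that the invariants are not identically zero as polynomials on the ambient space; it says nothing about whether they vanish identically on the \emph{other} tree's model, which is the containment you must rule out. You correctly flag this as ``the crux,'' but then point back to that same computation, which cannot resolve it.

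What actually makes the verification possible in the paper is a point you misread: Lemma \ref{lem:2nonid} is not a ``warning'' about accidental symmetries but the constructive engine of the proof. The paper chooses the non-ultrametric metric tree $((a\tc 2,c\tc 0)\tc 1,b\tc 1)$ precisely so that, by Lemma \ref{lem:2nonid}, the two marginalizations $P_{+\cdot\cdot}$ and $P_{\cdot+\cdot}$ are equal symmetric matrices; the wrong-cherry invariant then collapses, up to a determinant factor, to the statement that some slice $P_{\cdot\cdot k}$ is asymmetric, and this is checked in closed form, with nonvanishing of
$$(P_{\cdot\cdot 1})_{12}-(P_{\cdot\cdot 1})_{21}=\tfrac{1}{10530}e^{-20}-\tfrac{1}{22230}e^{-25}-\tfrac{1}{20007}e^{-29}$$
following from the transcendence of $e$. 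Without engineering this collapse, evaluating a degree-$(\kappa+1)$ cofactor invariant at a coalescent-mixture point (an integral over gene-tree space) is not something a computer-algebra citation about non-triviality of the invariants will hand you. So your skeleton is the paper's, but the heart of the proof---the right rate matrix, the carefully rigged non-ultrametric tree, and the mechanism that makes the invariant computable---is missing, and as actually proposed (JC point plus the $\kappa=3$ Singular computation) the construction would fail.
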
 
\begin{proof}

We first suppose $\kappa=4$. For the 3-taxon trees $\phi^+=((a,b),c)$ and $\psi^+=((a,c),b)$, we show that \EE$(\psi^+)\cap$\EE$(\phi^+)$ has measure zero within \EE$(\psi^+)$. 
To do this, it is enough to construct one point in \EE$(\psi^+)$ that is not in the Zariski closure of \EE$(\phi^+)$, since that implies the Zariski closure 
of the intersection of \EE$(\psi^+)$ and \EE$(\phi^+)$ is of lower dimension than \EE$(\psi^+)$.

Let $N$ be an arbitrary effective population size and let
$\phi^+ = ((a\tc 2, c\tc 0)\tc 1,b\tc 1)$, with distances in coalescent units (number of generations divided by $2N$). 
Let $\mu = 1/2N$ and define $Q$ to be the Kimura 2-parameter (K2P) rate matrix
$$
\begin{pmatrix}
-4 & 1 & 2 & 1  \\
1 & -4 & 1 & 2 \\
2 & 1 & -4 & 1 \\
1 & 2 & 1 & -4 \\
\end{pmatrix}
$$
with equilibrium distribution $\pi =(\frac{1}{4}, \frac{1}{4}, \frac{1}{4}, \frac{1}{4})$.
Finally, let $P$ be the probability tensor that arises from this choice of
parameters in the \CK\ model.

Then letting $M=\exp(2Q)$, we see that $\widetilde P=P*(I,M,M)$ lies in $\UE(\psi^+)$, which implies that $P\in  \EE(\psi^+)$. 
To see that $P$ does
not belong to the Zariski closure of $EE(\phi^+)$ by Corollary \ref{cor:EEinvariants} it suffices to show that for some $k$
\begin{equation}
P_{+\cdot\cdot}\Cof(P_{\cdot+\cdot})^T P_{\cdot\cdot k} 
- P_{\cdot\cdot k}^T \Cof(P_{\cdot+\cdot}) P_{+\cdot\cdot}^T \ne 0. \label{eq:K2Pgenericpoint}
\end{equation}

Note that
$P_{+\cdot\cdot}$ and
$P_{\cdot+\cdot}$ are probability distribution matrices
for the same model on the 2-leaf species trees $(b\tc 1, c\tc 1)$ and $(a\tc 2, c\tc0)$.
But, by Lemma \ref{lem:2nonid},  
$$P_{+\cdot\cdot} = P_{\cdot+\cdot} = 
P_{+\cdot\cdot}^T = P_{\cdot+\cdot}^T,$$
so
$$
P_{+\cdot\cdot}P_{\cdot+\cdot}^{-T} = 
P_{\cdot+\cdot}^{-1} P_{+\cdot\cdot}^T  = 
I_4.$$
To show \eqref{eq:K2Pgenericpoint}, it is thus enough to show that some
$P_{..k}$ is not symmetric. This can be verified without appealing to numerical computation: 
For example, 
$$(P_{..1})_{12} - (P_{..1})_{21}=
\dfrac{1}{10530}e^{-20}-
\dfrac{1}{22230}e^{-25}-
\dfrac{1}{20007}e^{-29}.
$$
If this were zero, then multiplying by $e^{29}$ would show $e$ is a root of a rational polynomial, 
contradicting its transcendence. 

Thus \EE$(\psi^+)\cap$\EE$(\phi^+)$ has measure zero within \EE$(\psi^+)$.

Interchanging taxon names then shows the intersection of any two resolved 3-taxon tree models is of measure zero within them, and thus that a generic distribution in any single 3-taxon model lies only 
in that 3-taxon model. This establishes the theorem for 3-taxon trees when $\kappa=4$.

\smallskip

For larger trees $\psi^+$, each displayed rooted triple determines a measure zero subset of \EE$(\psi^+)$ containing all points where that rooted triple may not be identifiable from marginalizations of $P$ to those 3 taxa. Since there are a finite number of such sets, for a generic $P\in$ \EE$ (\psi^+)$, all displayed rooted triples are identifiable, and hence so is the tree $\psi^+$.

\smallskip

For $\kappa>4$, the proof follows by embedding 
the 4-state rate matrix above in the upper left corner of a 
$\kappa$-state GTR rate matrix and setting the remaining entries to 0.
\end{proof}

\begin{rmk}
Several comments are in order about the method of proof in Theorem \ref{thm:EEid}. First, if the matrix $Q$ is chosen to be  a Jukes-Cantor rate-matrix, then one finds that the same construction of $P$ leads to a point on which the invariants for \EE$(\phi^+)$ vanish. That is, $P$ is not `sufficiently generic' to identify the rooted tree. This is explored more thoroughly in the Appendix. 
\end{rmk}

Second, since the argument used an instance of the \CK\ model with a K2P rate matrix, it also establishes the following, which directly applies to models used for phylogenetic inference.

\begin{cor}
\label{cor: K2Psubmodel}
For $\kappa =4$, consider any submodel of \EE\ such that each $\psi^+$ has an analytic parameterization general enough to contain the Kimura 2-parameter coalescent mixture model with constant population size. Then for generic parameters the rooted topological tree $\psi^+$ is identifiable.
\end{cor}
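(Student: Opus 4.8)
The plan is to deduce Corollary \ref{cor: K2Psubmodel} from Theorem \ref{thm:EEid} by the same analytic-variety argument that was used to pass from the bare \UE\ model to its analytic submodels in Theorem \ref{thm:UEidanalytic}. Let me first set up the framework. For each rooted topological tree $\psi^+$ on the given taxon set, let $f_{\psi^+}\tc \Theta_{\psi^+}\to \EE(\psi^+)$ be the analytic parameterization of the submodel in question, where $\Theta_{\psi^+}$ is a full-dimensional connected subset of some $\RR^k$. The hypothesis is that each such submodel is ``general enough to contain the K2P coalescent mixture model with constant population size,'' meaning the image $f_{\psi^+}(\Theta_{\psi^+})$ includes (or has in its closure) the particular \CK\ point $P$ built from the K2P rate matrix in the proof of Theorem \ref{thm:EEid}.

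\smallskip
The key steps, in order, would be as follows. First I would fix a tree $\psi^+$ and a distinct tree $\phi^+\ne\psi^+$, and recall from Theorem \ref{thm:EEid} (via the K2P construction) that there is a point $P$ in $\EE(\psi^+)$ that does \emph{not} lie in the Zariski closure of $\EE(\phi^+)$. Second, I would note that the proof of Theorem \ref{thm:EEid} exhibits $P$ as precisely a K2P coalescent-mixture point, so by the containment hypothesis $P$ lies in the closure of $f_{\psi^+}(\Theta_{\psi^+})$. Third, because $P$ fails to satisfy the \EE$(\phi^+)$ invariants of Corollary \ref{cor:EEinvariants}, the preimage $f_{\psi^+}^{-1}\big(\EE(\psi^+)\cap \EE(\phi^+)\big)$ is contained in the vanishing locus of a nonzero analytic function on $\Theta_{\psi^+}$ (namely the pullback under $f_{\psi^+}$ of one of those polynomial invariants, which is not identically zero since it is nonzero near $P$). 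Hence that preimage is a proper analytic subvariety of $\Theta_{\psi^+}$ and so has measure zero. Finally, since there are only finitely many trees $\phi^+\ne\psi^+$, the union of these bad sets still has measure zero, so for generic $\theta\in\Theta_{\psi^+}$ the distribution $f_{\psi^+}(\theta)$ lies in $\EE(\psi^+)$ and in no other $\EE(\phi^+)$, which identifies $\psi^+$.

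\smallskip
The main obstacle I anticipate is the closure issue: the K2P point $P$ may lie only in the \emph{closure} of $f_{\psi^+}(\Theta_{\psi^+})$ rather than in the image itself, exactly as the JC limit was used in Theorem \ref{thm:UEidanalytic} and in the remark preceding Theorem \ref{thm:svdid}. I would handle this by observing that the relevant invariant polynomial of Corollary \ref{cor:EEinvariants}, pulled back along the (continuous) parameterization, is a nonzero analytic function because it is nonzero at parameter values arbitrarily close to those yielding $P$; nonvanishing at a limit point forces nonvanishing on an open neighborhood, so the pullback is not identically zero on the connected set $\Theta_{\psi^+}$. One must also confirm this argument is carried out for $\kappa=4$, which is the scope stated in the corollary and matches the $4\times 4$ K2P construction; the embedding trick used for $\kappa>4$ in Theorem \ref{thm:EEid} is not needed here. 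The remaining steps are routine: reducing higher-taxon trees to rooted triples exactly as in Theorem \ref{thm:EEid}, and invoking lower semicontinuity of the measure-zero property under finite unions.
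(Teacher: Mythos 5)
Your proposal is correct and is essentially the paper's intended argument: the paper derives Corollary \ref{cor: K2Psubmodel} in one sentence from the proof of Theorem \ref{thm:EEid}, precisely because that proof's witness point is a K2P coalescent-mixture distribution, and the passage to analytic submodels is the same pullback-of-invariants argument used in Theorem \ref{thm:UEidanalytic}. Your explicit handling of the closure/limit issue via continuity of the invariants, and your reduction of larger trees to rooted triples, match the paper's (implicit) reasoning rather than constituting a different route.
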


Finally, while our proof of identifiability of a rooted tree under the \EE\ model fails for the \CK\ Jukes-Cantor model, unrooted trees are still identifiable under that model. To establish this,
note that a probability distribution for a 4-taxon tree on taxa $a,b,c,d$ under the  \EE\ model has the form $P=\tilde P*(M_a,M_b,M_c,M_d)$,
with $\tilde P$ in the UE model and the Markov matrices invertible. As a result, its flattenings can be expressed as
\begin{align*}
\Flat_{ab|cd}(P)&=(M_a\otimes M_b)^T \Flat_{ab|cd}(\tilde P) (M_c\otimes M_d),\\
\Flat_{ac|bd}(P)&=(M_a\otimes M_c)^T \Flat_{ac|bd}(\tilde P) (M_b\otimes M_d),\\
\Flat_{ad|bc}(P)&=(M_a\otimes M_d)^T \Flat_{ad|bc}(\tilde P) (M_b\otimes M_c).\\
\end{align*}

Since $M_a,M_b,M_c,$ and $M_d$ have full rank, this implies the rank of each flattening of $P$ is equal to the rank of the corresponding flattening of $\tilde P$.
It is then straightforward to obtain the following analog of Theorem \ref{thm:svdids}.

\begin{thm}\label{thm:svdidsEE}
The SVDquartets method, using an exact method to construct a tree from
a collection of quartets, gives a statistically consistent unrooted species tree topology estimator
for generic parameters under the EE model, and under any submodel with an analytic
parameterization general enough to contain the CK K2P model.
\end{thm}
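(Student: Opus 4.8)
The plan is to adapt the consistency argument of Theorem \ref{thm:svdids} essentially verbatim, replacing its appeal to Theorems \ref{thm:rankUEflattenings} and \ref{thm:svdid} (the \UE\ rank statements) with the rank facts just established for the \EE\ model in the display preceding this theorem. The key observation, already recorded above, is that for $P = \tilde P * (M_a,M_b,M_c,M_d)$ with $\tilde P \in \UE$ and the $M_i$ invertible, each flattening of $P$ factors as a product of the corresponding flattening of $\tilde P$ with Kronecker products of the $M_i$; since Kronecker products of invertible matrices are invertible, $\operatorname{rank}\Flat_q(P) = \operatorname{rank}\Flat_q(\tilde P)$ for each $q \in \{ab|cd, ac|bd, ad|bc\}$. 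Thus the \EE\ model inherits exactly the rank signature of \UE: for generic $P$ the flattening for the displayed quartet has rank at most $\binom{\kappa+1}{2}$, while the two ``wrong'' flattenings have full rank $\kappa^2$.

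First I would fix generic parameters giving a distribution $P$ in the \EE\ model (or in an analytic submodel general enough to contain the CK K2P model), and fix any four taxa $a,b,c,d$ with $ab|cd$ displayed on the unrooted tree $\psi$. Applying the rank-equality observation together with the generic \UE\ ranks from Theorem \ref{thm:rankUEflattenings}, I conclude that $\Flat_{ab|cd}(P)$ has at least $\binom{\kappa}{2}$ singular values equal to zero while $\Flat_{ac|bd}(P)$ and $\Flat_{ad|bc}(P)$ have all positive singular values. From here the argument is word-for-word that of Theorem \ref{thm:svdids}: denoting the empirical distribution from $s$ sites by $\hat P_s$, convergence $\hat P_s \to P$ in probability plus continuity of the singular-value map gives convergence of $\sigma(\Flat_q(\hat P_s))$ to $\sigma(\Flat_q(P))$ for each $q$. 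With the SVD score $\mu$ defined as the sum of the $\binom{\kappa}{2}$ smallest singular values, genericity yields the strict inequality
$$0 = \mu(\Flat_{ab|cd}(P)) < \min\{\mu(\Flat_{ac|bd}(P)),\ \mu(\Flat_{ad|bc}(P))\},$$
and this strict gap persists with probability approaching $1$ for the empirical scores, so the quartet minimizing $\mu$ is the displayed one with probability tending to $1$.

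Finally, since there are only finitely many four-element subsets of $X$, a union bound shows that with probability approaching $1$ all inferred quartets are displayed on $\psi$ simultaneously, whereupon the exact amalgamation method $M$ returns $\psi$; this establishes consistency. For the analytic submodel containing the CK K2P model, the generic \EE\ ranks needed above are guaranteed by the same limiting argument used to pass from Theorem \ref{thm:rankUEflattenings} to Theorem \ref{thm:svdid}: rank is lower semicontinuous, the JC point constructed in the proof of Theorem \ref{thm:rankUEflattenings} has full-rank wrong flattenings, and the K2P coalescent model limits onto JC as $N \to 0^+$. I do not expect any genuine obstacle here, since all the analytic work was done in establishing the flattening factorization and the \UE\ rank theorem; the only point requiring a moment of care is verifying that genericity in the \EE\ parameterization (and in the submodel) does indeed land on the full-measure set where the wrong flattenings attain rank $\kappa^2$, which follows because the rank-$\kappa^2$ condition is Zariski-open and nonempty on each irreducible component, exactly as in the cited theorems.
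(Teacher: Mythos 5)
Your proposal is correct and takes essentially the same route as the paper: the paper's own justification consists of the flattening factorization $\Flat_q(P)=(M_i\otimes M_j)^T \Flat_q(\tilde P)(M_k\otimes M_l)$ stated just before the theorem, the observation that invertibility of the Kronecker factors makes each flattening of $P$ have the same rank as the corresponding flattening of $\tilde P\in\UE$, and then a direct appeal to the consistency argument of Theorem \ref{thm:svdids} together with the genericity and JC-limit considerations of Theorems \ref{thm:rankUEflattenings} and \ref{thm:svdid}. Your write-up simply spells out these same steps (including the Zariski-openness argument transferring genericity to the \EE\ model and its analytic submodels), so there is nothing to correct.
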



\appendix
\section{Pseudo-exchangeability for the Jukes-Cantor Model}
\label{sec: pseudoexchangeability}

The proof of Theorem \ref{thm:EEid}, on the generic identifiability of the tree topology under the \EE\ model, used a particular point in the \EE\ model arising from the \CK\ Kimura 2-parameter model. Here, we show that it is not possible to use similar arguments with a point in the \CK\ Jukes-Cantor model. We do this by specifically considering the \CK\ Jukes-Cantor model, and showing that the model always has `extra symmetries' that prevent the identification of the rooted triple tree by these invariants.

\begin{prop} 
\label{prop: psuedo-exchangeability} Consider the \CK\ Jukes-Cantor model on the tree
$((a\tc \ell_a,b\tc \ell_b)\tc  \ell_{ab}, c \tc  \ell_c )$.
If $\ell_a = \ell_{ab} + \ell_c$ then the resulting probability tensor $P = (p_{ijk})$ exhibits $a,c$ exchangeability, that is,
$p_{ijk} = p_{kji}$. 
\end{prop}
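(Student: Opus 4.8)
The plan is to exploit the very constrained form of Jukes--Cantor site-pattern tensors to reduce the claimed symmetry to a single scalar identity, and then to obtain that identity from the 2-taxon non-identifiability already recorded in Lemma \ref{lem:2nonid}. First I would note that any $\kappa$-state JC tensor on three taxa is constant on pattern types: writing $v_0$ for the common value of the entries $p_{iii}$, writing $v_3$ for the entries with $i,j,k$ distinct, and writing $v_{ab},v_{ac},v_{bc}$ for the entries in which exactly the indicated pair agrees (so $v_{ab}=p_{iik}$ with $k\neq i$, etc.), the tensor is determined by these five numbers. Swapping the first and third indices carries the type ``$a,b$ agree'' to ``$b,c$ agree'' and fixes the other types, so $p_{ijk}=p_{kji}$ for all $i,j,k$ is \emph{equivalent} to the single equality $v_{ab}=v_{bc}$.

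Next I would compute the difference $v_{ab}-v_{bc}$ by averaging over gene trees. Viewing $P$ as a coalescent mixture, each gene tree contributes the JC site-pattern distribution on an unrooted metric star with center-to-leaf distances $d_a,d_b,d_c$; since JC is reversible I may root at the center, giving $v_{ab}=\tfrac1\kappa\,\mathbb{E}\big[\sum_r (M_{d_a})_{rs}(M_{d_b})_{rs}(M_{d_c})_{rt}\big]$ for any fixed $s\neq t$, and similarly for $v_{bc}$. Substituting the JC entries $(M_d)_{rr}=\tfrac1\kappa+\tfrac{\kappa-1}{\kappa}g_d$ and $(M_d)_{rr'}=\tfrac1\kappa(1-g_d)$ for $r\neq r'$, where the decay satisfies $g_d g_{d'}=g_{d+d'}$, the sum over the center state $r$ breaks into the cases $r=s$, $r=t$, and $r\neq s,t$. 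The point I expect to verify is that the term symmetric in all three taxa cancels and the dependence on the ``third'' decay drops out, leaving
$$v_{ab}-v_{bc}=\frac1{\kappa^2}\Big(\mathbb{E}[g_{d_a}g_{d_b}]-\mathbb{E}[g_{d_b}g_{d_c}]\Big)=\frac1{\kappa^2}\Big(\mathbb{E}[g_{D_{ab}}]-\mathbb{E}[g_{D_{bc}}]\Big),$$
where $D_{xy}=d_x+d_y$ is the gene-tree distance between $x$ and $y$. Thus the exchangeability we want requires only equality of the \emph{expected pairwise decays} for $\{a,b\}$ and $\{b,c\}$ --- a much weaker condition than invariance of the full gene-tree distribution, and one special to the coarseness of JC.

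Finally I would identify these expected decays with the 2-taxon marginals and invoke Lemma \ref{lem:2nonid}. Each quantity $\mathbb{E}[g_{D_{xy}}]$ is a functional of the marginal matrix $P|_{\{x,y\}}$ (it is the single JC parameter of that averaged distribution, recoverable from $P(x=y)$). Because both the coalescent and the substitution process marginalize cleanly, $P|_{\{a,b\}}$ and $P|_{\{b,c\}}$ are exactly the 2-taxon \CK\ JC distributions on the induced species trees $(a\tc\ell_a,\,b\tc\ell_b)$ and $(b\tc\ell_b+\ell_{ab},\,c\tc\ell_c)$. By Lemma \ref{lem:2nonid} each depends only on the total path length between the two taxa, namely $\ell_a+\ell_b$ and $\ell_b+\ell_{ab}+\ell_c$. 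The hypothesis $\ell_a=\ell_{ab}+\ell_c$ is precisely the statement that these two path lengths coincide, so $P|_{\{a,b\}}=P|_{\{b,c\}}$, hence $\mathbb{E}[g_{D_{ab}}]=\mathbb{E}[g_{D_{bc}}]$, and therefore $v_{ab}=v_{bc}$, giving $p_{ijk}=p_{kji}$.

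The step I expect to be the main obstacle is the second one: checking that in $v_{ab}-v_{bc}$ everything except the pairwise decays cancels. The bookkeeping must correctly track the three cases in the sum over the center state and the symmetry of the triple-overlap term under permuting the taxa. The conceptual payoff, that JC exchangeability of a non-cherry pair is governed entirely by 2-taxon marginals, is exactly what makes Lemma \ref{lem:2nonid} applicable and what singles out $\ell_a=\ell_{ab}+\ell_c$ as the right hypothesis.
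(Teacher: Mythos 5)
Your proposal is correct, and its skeleton matches the paper's: both arguments reduce, via the five Jukes--Cantor pattern classes, to the single scalar equality $v_{ab}=v_{bc}$, and both derive that equality from Lemma \ref{lem:2nonid} applied to the induced $2$-taxon species trees $(a\tc\ell_a,\,b\tc\ell_b)$ and $(b\tc\ell_b+\ell_{ab},\,c\tc\ell_c)$, whose tip-to-tip path lengths agree precisely under the hypothesis $\ell_a=\ell_{ab}+\ell_c$. Where you genuinely diverge is in how equality of those two marginals is upgraded to $v_{ab}=v_{bc}$. The paper never leaves the observed tensor: it equates one off-diagonal entry of the two marginal matrices, $P_{AC+}=P_{+CA}$, expands each side into entries of $P$, and cancels the three-distinct-state terms (which all lie in a single JC class), leaving $p_{ACC}=p_{CCA}$, i.e.\ $v_{bc}=v_{ab}$; no gene trees, transition matrices, or integrals appear. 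You instead descend to the coalescent mixture and compute directly: for a fixed star gene tree with decays $\alpha=g_{d_a}$, $\beta=g_{d_b}$, $\gamma=g_{d_c}$, the difference of the two pattern sums collapses to $\beta(\alpha-\gamma)/\kappa$, giving your identity $v_{ab}-v_{bc}=\kappa^{-2}\bigl(\mathbb{E}[g_{D_{ab}}]-\mathbb{E}[g_{D_{bc}}]\bigr)$; I checked this bookkeeping and it is right, so the step you flagged as the main obstacle does go through, and $\mathbb{E}[g_{D_{xy}}]$ is indeed determined by the marginal $P|_{\{x,y\}}$ via $\Pr(x=y)=\tfrac1\kappa+\tfrac{\kappa-1}{\kappa}\mathbb{E}[g_{D_{xy}}]$. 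As for what each approach buys: the paper's cancellation is shorter and entirely elementary, while your computation works verbatim for every $\kappa\ge 2$ (the paper phrases its argument for $\kappa=4$, though it generalizes easily) and yields a sharper quantitative fact --- the failure of $a,c$ exchangeability for a JC coalescent mixture is exactly proportional to the difference of expected pairwise decays --- which makes explicit the ``coarseness of JC'' that the paper exploits only implicitly, and which shows the hypothesis $\ell_a=\ell_{ab}+\ell_c$ is not merely sufficient but essentially necessary, since with a common population size the expected within-population coalescence factors coincide and the sign of $v_{ab}-v_{bc}$ is governed by comparing $g_{\ell_a+\ell_b}$ with $g_{\ell_b+\ell_{ab}+\ell_c}$.
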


\begin{proof}
Let $P = (p_{ijk})$ be a probability tensor from the \CK\ Jukes-Cantor model on a 3-leaf tree.
While $P$ has 64 entries,
%
because the site substitution model is the Jukes-Cantor model, it
 has at most five distinct entries.
Thus, we may group the coordinates of $P$ into five equivalance classes,
which we represent by
$$[p_{AAA}], [p_{AAC}],[p_{ACA}],[p_{ACC}],[p_{ACG}].$$
For any representative of the equivalence class
$[p_{AAA}]$, $[p_{ACA}],$ or $[p_{ACG}]$, 
swapping the first and third indices produces another representative
of the same equivalence class. 
However, for representatives of the equivalence class
$[p_{AAC}]$, swapping the first and third indices produces a 
representative of the equivalence class $[p_{ACC}]$,
and vice versa. Therefore, to prove the proposition, it 
suffices to show that for $P$, $[p_{AAC}]$ 
and $[p_{ACC}]$ are equal. To establish this, 
we prove that $p_{AAC} = p_{ACC}$.

Restricting to the leaf set $\{a,b\}$, we obtain the 
2-leaf rooted tree
$(a\tc \ell_a,b\tc \ell_b)$ and the probability of observing
state $ij$ from the \CK\ Jukes-Cantor model on this tree is
$$P_{ij+} = p_{ijA} + p_{ijC} + p_{ijG} + p_{ijT}.$$
Likewise, by restricting to the leaf set $\{b,c\}$, we obtain the
2-leaf rooted tree
$(b\tc \ell_b + \ell_{ab},c \tc \ell_c)$ and the probability of observing
state $jk$ from the \CK\ Jukes-Cantor model on this tree is
$$P_{+jk} = p_{Ajk} + p_{Cjk} + p_{Gjk} + p_{Tjk}.$$

Note that since $\ell_a = \ell_{ab} + \ell_c,$ 
the 2-leaf species trees obtained by restricting to $\{a,b\}$ and $\{b,c\}$ 
differ only by the location of the root.
By Lemma \ref{lem:2nonid}, since the Jukes-Cantor model is a
submodel of GTR, the probability distribution matrices for the
JC models on these trees are symmetric and equal.
Therefore, we have 
$P_{ij+} = P_{ji+}=  P_{+ji} = P_{+ij}$.
Specifically, this implies $P_{AC+} = P_{+CA}$, or
$$p_{ACA} + p_{ACC} + p_{ACG} + p_{ACT} = 
p_{ACA} + p_{CCA} + p_{GCA} + p_{TCA}.$$
Under the JC model, $p_{ACG}$, $p_{ACT}$, $p_{GCA}$, and $p_{TCA}$ 
all belong to the equivalence class of coordinates with 
three distinct indices, which is to say, 
$p_{ACG} = p_{ACT} = p_{GCA} = p_{TCA}$.
Thus, by cancellation, the equation above reduces to 
$p_{CCA} = p_{ACC}$. Since $p_{CCA}$ and $p_{AAC}$ are in the same
JC equivalence class, this implies
$p_{AAC} = p_{ACC}.$
\end{proof}
\begin{cor}
The invariants of Corollory \ref{cor:EEinvariants} associated to all of the trees $((a,b),c)$, $((a,c),b)$ and $((b,c),a)$ vanish on all probability tensors $P$ arising from the Jukes-Cantor \CK\ model on any of these trees.
\end{cor}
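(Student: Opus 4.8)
The plan is to reduce to a single tree by relabeling and then dispatch the three cherries one at a time, with Proposition \ref{prop: psuedo-exchangeability} doing the essential work for the two ``wrong'' cherries. Since the statement is symmetric under permuting the taxon labels, I would assume without loss of generality that $P$ arises from the \CK\ Jukes--Cantor model on $((a,b),c)$, with arbitrary positive edge lengths $\ell_a,\ell_b,\ell_{ab},\ell_c$. The invariants associated to $((a,b),c)$ itself, i.e.\ those of the cherry $\{a,b\}$, vanish immediately: $P$ lies in the \CK\ model, hence in \UE$((a,b),c)$ and so in \EE$((a,b),c)$, and Corollary \ref{cor:EEinvariants} applies. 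It therefore remains to show that the invariants for the cherries $\{a,c\}$ and $\{b,c\}$ vanish on $P$; by the $a\leftrightarrow b$ symmetry of the topology $((a,b),c)=((b,a),c)$ it is enough to treat $\{a,c\}$.

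For the cherry $\{a,c\}$ I would build from $P$ an element of \UE$((a,c),b)$ by lengthening pendant edges. The key point is that the $*$-action of a Jukes--Cantor transition matrix $M_s=\exp(\mu Q s)$ in a pendant index reproduces the effect of extending that pendant edge: since substitution along the pendant edge $a$ contributes the single factor $M_{\ell_a}$ and JC matrices commute with $M_{\ell_a}M_s=M_{\ell_a+s}$, a direct computation shows $P*_aM_s$ is again the \CK\ Jukes--Cantor tensor on $((a,b),c)$ with $\ell_a$ replaced by $\ell_a+s$ and nothing else changed. I would then pick $s,t\ge 0$ with $\ell_a+s=\ell_{ab}+(\ell_c+t)$, which is always possible (take $t=0$ when $\ell_{ab}+\ell_c\ge\ell_a$ and $s=0$ otherwise, so that $s,t$ remain non-negative and the matrices $M_s,M_t$ stay Markov). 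Writing $P'=P*(M_s,I,M_t)$, the tensor $P'$ is the \CK\ Jukes--Cantor tensor on $((a,b),c)$ whose modified lengths satisfy the hypothesis $\ell_a'=\ell_{ab}+\ell_c'$ of Proposition \ref{prop: psuedo-exchangeability}, so $P'$ is $a,c$-exchangeable. Its restrictions to $\{a,b\}$ and $\{b,c\}$ are $2$-taxon \CK\ distributions, hence symmetric by Lemma \ref{lem:2nonid}; together with the $a,c$-exchangeability these are precisely the linear symmetries defining \UE$((a,c),b)$ (cf.\ the constraint bookkeeping of Example \ref{ex:UEmodel}), so $P'\in$\UE$((a,c),b)$.

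Finally, because $M_s,I,M_t$ are non-singular Markov matrices and $P*(M_s,I,M_t)=P'\in$\UE$((a,c),b)$, Definition \ref{defn: ee} yields $P\in$\EE$((a,c),b)$ outright, so Corollary \ref{cor:EEinvariants} forces the cherry-$\{a,c\}$ invariants to vanish on $P$. Running the same construction with $a$ and $b$ interchanged handles the cherry $\{b,c\}$, and relabeling taxa covers the cases in which $P$ instead comes from $((a,c),b)$ or $((b,c),a)$. The step I expect to require the most care is the one in the middle paragraph: confirming that the $*$-action by a JC matrix in a pendant index genuinely reproduces edge extension in the \CK\ model---so that $P'$ is a bona fide \CK\ tensor to which the Proposition applies---and checking that $a,c$-exchangeability together with the two $2$-taxon marginal symmetries supplied by Lemma \ref{lem:2nonid} exhaust, rather than merely partially satisfy, the linear constraints defining \UE$((a,c),b)$.
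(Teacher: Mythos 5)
Your core strategy is sound, and it rests on the same three ingredients as the paper's own proof: Proposition \ref{prop: psuedo-exchangeability}, the fact that the $*$-action of a JC matrix in a pendant index of a \CK\ JC tensor reproduces lengthening that pendant edge (legitimate because no coalescent events occur on pendant edges, so the distribution factors as a coalescent-mixture tensor acted on in each pendant index by the pendant JC matrix, and JC matrices satisfy $M_sM_t=M_{s+t}$), and Definition \ref{defn: ee} combined with Corollary \ref{cor:EEinvariants}. But your decomposition is genuinely different. The paper constructs a single special tensor $\tilde P$ on $((a\tc\ell,b\tc\ell)\tc\ell,c\tc 0)$, which by the Proposition plus cherry exchangeability is fully symmetric and hence lies in \UE\ for all three trees simultaneously; it then writes the general \CK\ JC tensor as $\tilde P*(M_a,M_b,M_c)$, which requires the inequalities $\ell_a,\ell_b\ge\ell_{ab}$, and finally removes those inequalities by an analytic-continuation argument. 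You instead treat each ``wrong'' cherry separately and choose the extension amounts $s,t\ge0$ adaptively according to the sign of $\ell_{ab}+\ell_c-\ell_a$, so the extension matrices are always honest (non-singular) Markov matrices and no analyticity step is needed at all; moreover your argument applies the Markov matrices to $P$ itself, which is exactly the direction demanded by Definition \ref{defn: ee}. Your bookkeeping of the \UE$((a,c),b)$ constraints is also correct: full $a,c$-exchangeability plus symmetry of the $\{a,b\}$ and $\{b,c\}$ marginals does exhaust them, since the $\{a,c\}$ marginal symmetry follows from the exchangeability. This is a real, if modest, simplification of the paper's route.

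There is, however, one incorrect step as written: the claim that ``$P$ lies in the \CK\ model, hence in \UE$((a,b),c)$.'' In the Appendix's setting the edge lengths are arbitrary---that is the whole point of Proposition \ref{prop: psuedo-exchangeability}, and the corollary is meant to cover non-ultrametric trees, as in the proof of Theorem \ref{thm:EEid}---and when $\ell_a\ne\ell_b$ the coalescent-JC tensor is \emph{not} $a,b$-exchangeable, so it does not lie in \UE$((a,b),c)$; only the ultrametric \CK\ model sits inside \UE. The repair is immediate with your own device: act by $(M_s,M_{s'},I)$ with $s,s'\ge0$ chosen so that $\ell_a+s=\ell_b+s'$; the result is a \CK\ JC tensor with equal cherry pendant lengths, hence $a,b$-exchangeable, and its two-taxon marginals are symmetric by Lemma \ref{lem:2nonid}, so it lies in \UE$((a,b),c)$ and therefore $P\in\text{\EE}((a,b),c)$. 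With that one-sentence fix, your proof is complete and correct for all edge lengths, which is precisely the generality the paper needs its analyticity argument to reach.
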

\begin{proof}
First consider $\tilde P$ arising from the \CK\ Jukes-Cantor model on the tree  $((a\tc \ell,b\tc \ell)\tc  \ell, c \tc  0 )$. By the proposition, this tensor is fully-symmetric, that is, invariant under any permutation of the indices, for any positive value of $\ell$. It thus lies in the \UE\ model for all three trees. Now
the probability tensor $P$ from the \CK\ JC model on 
$((a\tc \ell_a,b\tc \ell_b)\tc  \ell, c \tc  \ell_c )$, 
where $\ell_a,\ell_b\ge \ell$ and $\ell_c\ge 0$ 
can be expressed as
$$P=\tilde P*(M_a,M_b,M_c),$$
where $M_a$, $M_b$, $M_c$ are Jukes-Cantor matrices for edges of length $\ell_a-\ell$, $\ell_b-\ell$, $\ell_c-\ell$, respectively.  Thus $P$ lies in the \EE\ model for all three trees. Therefore the invariants associated to all three trees vanish on it.

Moreover, since the entries of probability tensors in the \EE\ model are parametrized by analytic functions of the edge lengths, composing these function with the invariants gives analytic functions that vanish on a full-dimensional subset of the parameter space, which must therefore be zero on the entire parameter space. Thus the invariants vanish  on the model even when the terminal edge lengths do not satisfy the assumed inequalities.
\end{proof}


\bibliographystyle{alpha}
\bibliography{ExchMod}

\end{document}